\newcommand{\qw}[1][-1]{\ar @{-} [0,#1]}
\newcommand{\gate}[1]{*{\xy *+<.6em>{#1};p\save+LU;+RU **\dir{-}\restore\save+RU;+RD **\dir{-}\restore\save+RD;+LD **\dir{-}\restore\POS+LD;+LU **\dir{-}\endxy} \qw}
\newcommand{\measureD}[1]{*{\xy*+=+<.5em>{\vphantom{\rule{0em}{.1em}#1}}*\cir{r_l};p\save*!R{#1} \restore\save+UC;+UC-<.5em,0em>*!R{\hphantom{#1}}+L **\dir{-} \restore\save+DC;+DC-<.5em,0em>*!R{\hphantom{#1}}+L **\dir{-} \restore\POS+UC-<.5em,0em>*!R{\hphantom{#1}}+L;+DC-<.5em,0em>*!R{\hphantom{#1}}+L **\dir{-} \endxy} \qw}
\newcommand{\multimeasureD}[2]{*+<1em,.9em>{\hphantom{#2}}\save[0,0].[#1,0];p\save !C *{#2},p+LU+<0em,0em>;+RU+<-.8em,0em> **\dir{-}\restore\save +LD;+LU **\dir{-}\restore\save +LD;+RD-<.8em,0em> **\dir{-} \restore\save +RD+<0em,.8em>;+RU-<0em,.8em> **\dir{-} \restore \POS !UR*!UR{\cir<.9em>{r_d}};!DR*!DR{\cir<.9em>{d_l}}\restore \qw}
\newcommand{\multigate}[2]{*+<1em,.9em>{\hphantom{#2}} \qw \POS[0,0].[#1,0];p !C *{#2},p \save+LU;+RU **\dir{-}\restore\save+RU;+RD **\dir{-}\restore\save+RD;+LD **\dir{-}\restore\save+LD;+LU **\dir{-}\restore}
\newcommand{\ghost}[1]{*+<1em,.9em>{\hphantom{#1}} \qw}
\newcommand{\Qcircuit}[1][0em]{\xymatrix @*=<#1>}
\newcommand{\pureghost}[1]{*+<1em,.9em>{\hphantom{#1}}}
\newcommand{\multiprepareC}[2]{*+<1em,.9em>{\hphantom{#2}}\save[0,0].[#1,0];p\save !C
  *{#2},p+RU+<0em,0em>;+LU+<+.8em,0em> **\dir{-}\restore\save +RD;+RU **\dir{-}\restore\save
  +RD;+LD+<.8em,0em> **\dir{-} \restore\save +LD+<0em,.8em>;+LU-<0em,.8em> **\dir{-} \restore \POS
  !UL*!UL{\cir<.9em>{u_r}};!DL*!DL{\cir<.9em>{l_u}}\restore}
\newcommand{\prepareC}[1]{*{\xy*+=+<.5em>{\vphantom{#1\rule{0em}{.1em}}}*\cir{l^r};p\save*!L{#1} \restore\save+UC;+UC+<.5em,0em>*!L{\hphantom{#1}}+R **\dir{-} \restore\save+DC;+DC+<.5em,0em>*!L{\hphantom{#1}}+R **\dir{-} \restore\POS+UC+<.5em,0em>*!L{\hphantom{#1}}+R;+DC+<.5em,0em>*!L{\hphantom{#1}}+R **\dir{-} \endxy}}
\newcommand{\poloFantasmaCn}[1]{{{}^{#1}_{\phantom{#1}}}}
\newtheorem{lemma}{Lemma} \newtheorem{proposition}{Proposition}
\newtheorem{corollary}{Corollary} \newtheorem{theorem}{Theorem}
\theoremstyle{definition}
 \newtheorem{definition}{Definition}
\theoremstyle{definition}
 \newtheorem*{definition*}{Definition}
\theoremstyle{remark}
\def\>{\rangle}
\def\<{\langle} 
\def\trnsfrm#1{\mathcal #1}\newcommand
   \def\rA{{\rm A}}\def\rB{{\rm
    B}}\def\rC{{\rm C}}\def\rD{{\rm D}}    \def\rI{{\rm I}}   \def\rX{{\rm X}} \def\rY{{\rm Y}}
\def\rZ{{\rm Z}}\def\rU{{\rm U}}\def\rV{{\rm V}}\def\rW{{\rm W}}
\def\rN{{\rm N}}\def\rW{{\rm W}}\def\rL{{\rm L}}
\def\tA{\trnsfrm A}\def\tB{\trnsfrm B}\def\tC{\trnsfrm
  C}\def\tD{\trnsfrm D} \def\tS{\trnsfrm S} \def\tI{\trnsfrm
  I}\def\tU{\trnsfrm U} \def\tV{\trnsfrm V}
\def\tW{\trnsfrm W}
\def\tZ{\trnsfrm Z} \def\tR{\trnsfrm R}\def\tP{\trnsfrm P }\def\tS{\trnsfrm S}
\def\Cntset{{\mathsf{Eff}}} 
\def\Stset{{\mathsf{St}}}
 \def\Trnset{{\mathsf{Transf}}}
\def\K#1{\left|#1\right)}  \def\B#1{\left(#1\right|}
\def\SC#1#2{\left(#1\right|\left.\!#2\right)}  
\def\Reals{{\mathbb R}}
\newcolumntype{?}{!{\vrule width 1.2pt}}
\definecolor{blue1}{rgb}{0.03, 0.27, 0.49}
\begin{document}

\title{Incompatibility of observables, channels and instruments in information theories}

\author{Giacomo Mauro \surname{D'Ariano}}

\email{dariano@unipv.it}

\affiliation{QUIT group,   Physics    Dept.,   Pavia
  University, and  INFN Sezione di Pavia,  via Bassi 6,
  27100 Pavia, Italy}

\author{Paolo \surname{Perinotti}}

\email{paolo.perinotti@unipv.it}

\affiliation{QUIT    group,   Physics    Dept.,   Pavia
  University, and  INFN Sezione di Pavia,  via Bassi 6,
  27100 Pavia, Italy}

\author{Alessandro \surname{Tosini}}

\email{alessandro.tosini@unipv.it}

\affiliation{QUIT group, Physics Dept., Pavia University, and INFN
  Sezione di Pavia, via Bassi 6, 27100 Pavia, Italy}

\begin{abstract}
Every theory of information, including classical and quantum, can be studied in the framework of operational 
probabilistic theories---where the notion of test generalizes that of quantum instrument, namely a collection of 
quantum operations summing to a channel, and simple rules are given for the composition of tests in parallel and in 
sequence. Here we study the notion of compatibility for tests of an operational probabilistic theory. Following the 
quantum literature, we first introduce the notion of strong compatibility, and then we illustrate its ultimate 
relaxation, that we deem weak compatibility. It is shown that the two notions coincide in the case of observation 
tests---which are the counterpart of quantum POVMs---while there exist weakly compatible channels that are not 
strongly compatible. We prove necessary and sufficient conditions for a theory to exhibit incompatible tests. We show 
that a theory admits of incompatible tests if and only if some information cannot be extracted without disturbance.
\end{abstract}

\maketitle

\section{Introduction}

The existence of incompatible observables, namely measurable quantities that cannot be accessed simultaneously, is at the heart of the departure between classical and quantum physics. Moreover, the presence of incompatible observables stands as a common ground behind several quantum hallmarks, as non-locality~\cite{LANDAU198754,PhysRevLett.103.230402,doi:10.1126/science.1192065,6280443,Khrennikov:2019vy} and contextuality~\cite{10.2307/24902153,PhysRevLett.48.291,PhysRevA.71.052108,PhysRevLett.112.040401,Acin:2015tx,PhysRevLett.119.120505,Dzhafarov:2017wq,PhysRevLett.119.050504}, and quantum no-go theorems, such as no-cloning~\cite{Wootters:1982wj} and no-information without disturbance~\cite{Busch2009}.

All the above effects related to quantum incompatibility have been studied in the details both within the mathematical 
framework of quantum theory and from a more fundamental perspective as features of an arbitrary information theory. 
The same fundamental analysis can be carried out for the primitive notion of incompatibility, a program already 
started within quantum logics~\cite{Lahti:1980tq}, and carried on in the last decade by Heinosaari and 
coauthors~\cite{Teiko-incompatibility-2016,Teiko-incompatibility-2017}. In the latter works, an operational definition 
of compatibility for \emph{quantum observables} and \emph{quantum channels} is given and thoroughly characterized. 
According to this approach two observables (channels) are compatible if there exists another observable (channel) that 
jointly implements them. 

In this paper we expand the operational notion of compatible processes for an arbitrary theory of information, and 
deepen its analysis; to this end we consider the framework of \emph{operational probabilistic theories}
~\cite{PhysRevA.81.062348,dariano_chiribella_perinotti_2017}. Moreover, we look at the compatibility of two arbitrary 
devices, what in quantum theory are called \emph{quantum instruments} or more generally \emph{tests} in an operational 
probabilistic theory. The compatibility of observables and channels, which are two kinds of tests, will be included in 
this analysis as special cases.

Given two devices, say A and B, two inequivalent operational notions of compatibility between them can be introduced. 
In the first case one one can say that A and B are compatible if there exists a third device, say C, that jointly 
implements both A and B. This is the spirit of references~\cite{Teiko-incompatibility-2016,Teiko-incompatibility-2017} 
on compatibility of quantum observables and channels. Alternatively, as proposed in this work, one can say that A and 
B are compatible if there exists a third device, say C, that after realizing A, leaves the possibility open of 
converting it to B via a post-processing, and viceversa there exists another device, say D, that first realizes B and 
then A via post-processing. We formalize the above intuitive ideas into two notions of compatibility, here denoted 
\emph{strong} and \emph{weak compatibility}, for tests of an operational probabilistic theory. As suggested by the 
names, compatible tests according to the first notion are also compatible according to the second one, while the 
converse is not true. While the two conditions coincide for observation tests, they indeed differ for general tests; 
e.g.~in quantum theory every pair of channels is weakly compatible, while there are pairs that are not strongly 
compatible.

The fundamental question, once compatibility is thoroughly defined, is to what other features of a theory is
it related, or in technical terms what are necessary or sufficient conditions for a theory to exhibit incompatible 
tests. Answering this question allows one to understand quantum incompatibility in relation to other operational 
aspects of the theory, rather than observing it as a sheer consequence of its mathematical formulation. Here we show 
that compatibility of tests is intimately related to the possibility of measuring any conceivable quantity without 
introducing disturbance on the measured system, and in turn this is equivalent to the request that all systems of the 
theory are classical.

After briefly reviewing the framework of operational probabilistic theories in Section~\ref{sec:II}, in Section~\ref{sec:III} we define strong and weak compatibility of tests. In Section~\ref{sec:IV} we prove necessary and sufficient conditions for a theory that does not contain any weakly incompatible tests, here denoted \emph{fully compatible theory}. Among those conditions a theory is fully compatible if and only if it satisfies \emph{full information without disturbance}, namely all the information encoded in the states of the theory can be read without disturbance. 

\section{Operational probabilistic theories}\label{sec:II}

The framework in which we introduce and study the notion of compatibile devices is that of {\em operational 
probabilistic theories} (OPT). Here we provide a brief review of the framework, with focus on those aspects that  
will be used in the remainder. For details and proofs, we refer
to~\cite{PhysRevA.81.062348,dariano_chiribella_perinotti_2017}
  
The essential building blocks of an operational theory are 
\emph{tests}, \emph{events}, and \emph{systems}. A test
$\test{A}=\{\tA_i\}_{i\in\rX}$ is the collection of events $\tA_i$, where $i$
labels the element of the outcome space $\rX$ of the test. Tests generalize the notion are the of quantum 
instruments, i.e.~collections of quantum operations that sum to a channel. An input and an output \emph{system} are 
associated to any test (event) with the following diagrammatic representation
\begin{equation*}
\Qcircuit @C=1em @R=.7em @! R { & \poloFantasmaCn \rA \qw &
  \gate{\test{A}} & \qw \poloFantasmaCn \rB &\qw}\;\ ,\;\ 
  \Qcircuit @C=1em @R=.7em @! R { & \poloFantasmaCn \rA \qw &
  \gate{\tA_i} & \qw \poloFantasmaCn \rB &\qw}\,.
\end{equation*}
One can compose tests in sequence and in parallel. \emph{Sequential composition:} 
given two tests $\test{A}=\{\tA_i\}_{i\in\rX}$ and $\test{B}=\{\tB_j\}_{j\in\rY}$ such that the input system of $\test{B}$ coincides with the output system of $\test{A}$ we can define their sequential composition as the test given by the collection of events
\begin{equation*}
  \Qcircuit @C=1em @R=.7em @! R { & \poloFantasmaCn \rA \qw &
    \gate{\tB_j\tA_i} & \qw \poloFantasmaCn \rC &\qw}\;=\; 
  \Qcircuit @C=1em @R=.7em @! R {& \poloFantasmaCn{\rA}\qw & \gate{
      \tA_{i}} & \poloFantasmaCn{\rB}\qw &
    \gate{\tB_{j}}&\poloFantasmaCn {\rC}\qw&\qw}\,,
\end{equation*}
for $i\in\rX$ and $j\in\rY$.  A singleton test is a test containing a single {\em deterministic} event.
For every system $\rA$ there exists a unique singleton test
$\{\tI_{\rA} \}$ such that $\tI_{\rB} \tA=\tA\tI_{\rA}=\tA$ for every
event $\tA$ with input $\rA$ and output $\rB$: we call $\tI_\rA$
{\em identity} of system $\rA$.  \emph{Parallel composition:} for every couple of systems $(\rA,\rB)$ we can form the
composite system $\rA\rB$, on which we can perform tests
$(\test{A}\otimes \test{B})_{\rX\times\rY}$ with events
$\tA_i\otimes\tB_j$ represented as follows
\begin{equation*}
  \begin{aligned}
    \Qcircuit @C=1em @R=.7em @! R {& \qw \poloFantasmaCn \rA &
      \multigate{1} {\tA_i\otimes \tB_j} & \qw
      \poloFantasmaCn \rC &\qw\\
      & \qw \poloFantasmaCn \rB & \ghost {\tA_i\otimes \tB_j} & \qw
      \poloFantasmaCn \rD &\qw}
  \end{aligned}\ =\ 
  \begin{aligned}
    \Qcircuit @C=1em @R=.7em @! R {& \qw \poloFantasmaCn \rA & \gate
      {\tA_i} & \qw \poloFantasmaCn \rC &\qw\\ & \qw \poloFantasmaCn
      \rB & \gate {\tB_j} & \qw \poloFantasmaCn \rD &\qw}
  \end{aligned}\,,
\end{equation*}
and satisfying the condition
$ ( \tC_{h} \otimes \tD_{k} ) ( \tA_{i} \otimes \tB_{j} ) = ( \tC_{h}
\tA_{i} ) \otimes ( \tD_{k}\tB_{j} ).  $
In quantum theory the parallel composition is the usual tensor product of
linear maps. However, for a general OPT, the parallel composition may
not coincide with a tensor product. We keep the notation $\otimes$ for an arbitrary parallel composition rule.
Intuitively speaking, the parallel composition of two tests represents their independent action on the two subsystems 
of a composite system.

There exists a {\em trivial system} $\rI$ such that $\rA\rI=\rI\rA=\rA$ for every system $\rA$. Tests with
input system $\rI$ and output $\rA$ are called {\em preparation tests}
of $\rA$, with events denoted as boxes without the input wire
$\Qcircuit @C=.5em @R=.5em { \prepareC{\rho} & \poloFantasmaCn{\rA}
  \qw & \qw }$
(or in formula as round kets $\K{\rho}_\rA$), while the tests with input system $\rA$ and output $\rI$ are
called {\em observation tests} of $\rA$, made of events with no output wire
\( \Qcircuit @C=.5em @R=.5em { & \poloFantasmaCn{\rA} \qw & \measureD{
    c} } \)
(in formula round bras $\B{ c}_\rA$). We will use the Greek letters for preparation events, 
e.g.~$\test{P}=\{\rho_i\}_{i\in\rX }$ and Latin letters for observation events, e.g.~$\test{C}=\{c_j\}_{j\in\rX }$.

In summary, an operational theory is a collection of systems and tests, closed under parallel and sequential 
composition, with a trivial system, and an identity test for every system of the theory.

Via sequential and parallel composition of diagrams one can build up \emph{circuits} that correspond 
to arbitrary tests. A circuit is \emph{closed} if it starts with a
  preparation test and ends with an observation test, namely its overall input and output systems are trivial. An
  \emph{operational probabilistic theory} (OPT) is an operational
  theory where any closed circuit of tests corresponds to a
  probability distribution for the joint test. Compound tests from the
  trivial system to itself are independent, both for sequential and
  parallel composition, namely their joint probability distribution is
  given by the product of the component joint probability
  distributions. For example the application of an observation event
  $c_i$ after the preparation event $\rho_j$ corresponds to the closed
  circuit $ \SC{ c_i }{ \rho_j}_\rA $ and denotes the probability of
  the outcome $(i,j)$ of the observation test $\test{c}$ after the
  preparation test $\test{\rho}$ of system $\rA$, i.e.
  \begin{equation*}
  \begin{aligned}
    \Qcircuit @C=.5em @R=.5em { \prepareC{\rho_j} &
      \poloFantasmaCn{\rA} \qw & \measureD{ c_i }
    }
  \end{aligned}\coloneqq
\Pr[i,j\,|\,
  \begin{aligned}
  \Qcircuit @C=.5em @R=.5em { \prepareC{\test{P}} & \poloFantasmaCn{\rA}
    \qw & \measureD{ \test{C}}} 
  \end{aligned}\,].
  \end{equation*}
  
Given a system $\rA$ of a probabilistic theory we can quotient the set
of preparation events of $\rA$ by the equivalence relation
$ \K{\rho}_\rA\sim\K{\sigma}_\rA \Leftrightarrow \SC{ c }{ \rho
}_\rA=\SC{ c }{ \sigma }_\rA$
for every observation event $c$. Similarly, we can quotient
observation events.  The equivalence classes of preparation events and
observation events of $\rA$ will be denoted by the same symbols as
their elements $\K{\rho}_\rA$ and $\B{c}_\rA$, respectively, and will
be called \emph{state} and \emph{effect} for system $\rA$. We will denote by $\Stset(\rA)$, $\Cntset(\rA)$
  the sets of states and effects of system $\rA$.  States and effects
  are real-valued functionals on each other, and can be
  naturally embedded in reciprocally dual real vector spaces,
  $\Stset_{\mathbb{R}}(\rA)$ and $\Cntset_{\mathbb{R}}(\rA)$, whose
  dimension is assumed to be finite. An event $\tA$ with input system $\rA$ and output system $\rB$ induces a
  linear map from $\Stset_{\mathbb R}(\rA\rC)$ to
  $\Stset_{\mathbb R}(\rB\rC)$ for each ancillary system $\rC$. The
  collection of all these maps (for each system $\rC$) is a {\em transformation} from
  $\rA$ to $\rB$, that represents the event $\tA$. In the following, the symbols $\tA$ and
  $ \Qcircuit @C=.5em @R=.5em { & \poloFantasmaCn{\rA} \qw &
    \gate{\tA} & \poloFantasmaCn{\rB} \qw &\qw} $
  will be used to represent the transformation corresponding to
  the event $\tA$. We will denote by $\Trnset(\rA,\rB)$, with linear 
  span $\Trnset_\mathbb{R}(\rA,\rB)$, the set of transformations from $\rA$ to $\rB$. A linear map
  $\tA\in\Trnset_{\mathbb R}(\rA,\rB)$ is {\em admissible} if it
  locally preserves the set of states $\Stset(\rA\rC)$, namely
  $\tA\otimes\tI_{\rC}(\Stset(\rA\rC)) \subseteq \Stset(\rB\rC)$, for
  every $\rC$. Given two maps $\tA,\tA^\prime\in\Trnset{(\rA,\rB)}$, one has
$\tA=\tA^\prime$, if and only if
\begin{equation*}
\begin{aligned}
  \Qcircuit @C=1em @R=.7em @! R {\multiprepareC{1}{\Psi}& \qw
    \poloFantasmaCn \rA &  \gate{\tA} & \qw \poloFantasmaCn \rB &\multimeasureD{1}{a} \\
    \pureghost\Psi &\qw& \qw \poloFantasmaCn \rC & \qw&\ghost{a}}
\end{aligned}
~=~
\begin{aligned}
\Qcircuit @C=1em @R=.7em @! R {\multiprepareC{1}{\Psi}& \qw
    \poloFantasmaCn \rA &  \gate{\tA^\prime} & \qw \poloFantasmaCn \rB &\multimeasureD{1}{a} \\
    \pureghost\Psi &\qw& \qw \poloFantasmaCn \rC & \qw&\ghost{a}}
\end{aligned}\ ,
\end{equation*}
for every $\rC$, every $\Psi\in\Stset{(\rA\rC)}$, and every
$a\in\Cntset(\rB\rC)$, namely they give the same probabilities within
every possible closed circuit. 

A deterministic transformation (i.e.~a transformation with conditional probability equal to 1 in every circuit) is 
also called a \emph{channel}. A transformation $\tA\in\Trnset(\rA\to\rB)$ is called \emph{reversible} if there 
exists a transformation $\tB\in\Trnset(\rB\to\rA)$ such that $\tB\tA=\tI_\rA$ and $\tA\tB=\tI_\rB$. In this case the 
inverse $\tB$ of $\tA$ is denoted as $\tA^{-1}$. A related notion is that of a \emph{left-reversible} transformation, 
that is a transformation $\tA\in\Trnset(\rA\to\rB)$ for which there exists $\tB\in\Trnset(\rB\to\rA)$
such that $\tB\tA=\tI_\rA$. In this case we say that $\tB$ is a left-inverse of $\tA$. In turn, $\tB$ is 
\emph{right-reversible}, and $\tA$ is a right-inverse of $\tB$. Notice that, unlike the inverse of a reversible 
transformation, the left-inverse as well as the right-inverse might not be unique.

  We finally introduce the notions of \emph{refinable} and \emph{atomic} event.

\begin{definition}[Refinement and coarse-graining] A {\em refinement} of an event 
$\tA\in\Trnset(\rA,\rB)$ is a collection of events $\{\tB_i\}_{i\in\rX}$ from $\rA$ to $\rB$, such that there exists a
test $\{\tB_i\}_{i\in\rY }$ with $X\subseteq Y$ and
$\tA=\sum_{i\in\rX}\tB_i$. Conversely, $\tA$ is called the {\em coarse-graining} of the events $\{\tB_i\}_{i\in\rX}$. 
Accordingly, a test $\test{D}\coloneqq\{\tD_l\}_{l\in\rW}$ is a {\em refinement} of the test $\test{C}\coloneqq\{\tC_k\}_{k\in\rZ}$ 
if there exists a partition of $\rW$ into disjoint sets $\rW_k$ such that $\tC_k=\sum_{l\in\rW_k}\tD_l$ for every $k\in\rZ$; conversely, we say that $\test C$ is a {\em coarse-graining} of $\test D$.
\end{definition}
For simplicity we will also refer to a  refinement of $\tA$  as $\tA=\sum_{i\in\rX}\tB_i$, without specifying the test including the events $\tB_i$. 
    
\begin{definition}[Refinable and atomic events]\label{def:atomic-events} 
  Given two events $\tA,\tB\in\Trnset(\rA,\rB)$ we say that $\tB$ {\em refines} $\tA$ if there exist a refinement
  $\{\tB_i\}_{i\in\rX}$ of $\tA$ such that $\tB\in\{\tB_i\}_{i\in\rX}$. A refinement
  $\{\tB_i\}_{i\in\rX}$ of $\tA$ is {\em trivial} if $\tB_i=\lambda_i \tA$,
  $\lambda_i\in [0,1]$, for every $i\in\rX$. An event $\tA$ is
  {\em atomic} if it admits only of trivial refinements. An event is refinable
  if it is not atomic.
\end{definition}
As in quantum theory, the word \emph{pure} is used as
synonym of atomic in the special case of states, and as usual a state that is not pure will be called \emph{mixed}.
 
A test where the same event, scaled with some probability, occurs more than once is redundant: indeed, it corresponds 
to a test where, upon occurrence of the outcome corresponding to the given event, a coin is flipped or some other 
random variable is sampled, and the outcome is supplemented with the value of the random variable. It is clear that 
the supplemental information in this case has nothing to do with the system undergoing the test. It is then often 
convenient to consider tests where redundancy is removed: 
\begin{definition}[Non redundant test]
We call a test $\test{A}\coloneqq\{\tA_i\}_{i\in\rX}$ {\em non redundant} when for every pair $i,j\in\rX$ one has 
$\tA_i\neq\lambda \tA_j$ for $\lambda>0$.
\end{definition}
From a redundant test one can always achieve a "maximal non redundant" one by taking the test made of coarse grainings
of all the sets of proportional elements.

\subsection{Causality and post-processing}

A notion at the basis of compatibility between tests will be that of \emph{post-processing}, here introduced for 
operational probabilistic theories in analogy to the post-processing of quantum 
instruments~\cite{PhysRevA.103.022615}. The post-processing of a test $\test A$, roughly speaking, corresponds to a 
second test that is run at the output of the test $\test A$. As one expects, the most general post-processing of a 
test exploits the possibility of using its outcomes for conditioning the choice of the post-processing maps. However, 
not every OPT offers the possibility to post-process tests in an arbitrary way. For this reason, throughout the paper 
we will restrict to causal theories~\cite{Perinotti2020cellularautomatain}, where using outcomes to condition the 
choice of a subsequent test always leads to admissible tests.

\begin{definition}[Strongly causal theories] A theory is {\em strongly causal} if for every test $\test{A}\coloneqq \{\tA_i\}_{i\in\rX}\subseteq\Trnset{(\rA,\rB)}$ and every  collection of tests $\test{P}^{(i)}\coloneqq \{\tP^{(i)}_j\}_{j\in\rY}\subseteq\Trnset{(\rB,\rC)}$, labelled by the outcomes $i\in\rX$ of test $\test{A}$, the test $\test{B}\coloneqq\{\tB_{i,j}\}_{(i,j)\in\rX\times\rY}\subseteq{\Trnset{(\rA,\rC)}}$, with events
\begin{align*} 
 \Qcircuit @C=0.7em @R=1em 
    {&
       \qw\poloFantasmaCn{\rA}& \gate{\tB_{i,j}}& \qw\poloFantasmaCn{\rC}&\qw
                                                                       }
=  
\begin{aligned}
    \Qcircuit @C=1em @R=.7em @! R { 
      & \qw\poloFantasmaCn{\rA}
      &\gate{{\tA}_{i}} &\qw\poloFantasmaCn{\rB}
      &\gate{{\tP_j^{(i)}}}  &\qw\poloFantasmaCn{\rC}
      &\qw}
  \end{aligned}\ ,
\end{align*}
is a test of the theory.
\end{definition}

We remark that, as suggested by the name, this notion of causality is strictly stronger than the uniqueness of the 
deterministic effect~\cite{PhysRevA.81.062348}, that is often used as a definition of causality, and here we will call 
{\em weak causality} for the sake of clarity.
\begin{proposition}[Unique deterministic effect~\cite{PhysRevA.81.062348}] In a strongly causal theory, for every 
system there exists a unique deterministic effect.
\end{proposition}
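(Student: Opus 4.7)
The proof naturally splits into existence and uniqueness, with uniqueness being the substantive content. Existence follows by coarse-graining: any observation test $\test{C}=\{c_i\}_{i\in\rX}$ of $\rA$ yields the deterministic effect $e\coloneqq\sum_{i\in\rX} c_i$, since the singleton $\{e\}$ is itself a test of the theory (it is the total coarse-graining of $\test C$). Every system of an OPT admits such observation tests, so at least one deterministic effect exists on $\rA$.

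For uniqueness, let $e$ and $e'$ be two deterministic effects of $\rA$. I want to show $\SC{e}{\rho}_\rA=\SC{e'}{\rho}_\rA$ for every $\rho\in\Stset(\rA)$, which is the equivalence relation defining effects. Fix such a $\rho$: as a preparation event it belongs to some preparation test $\test{P}=\{\rho_i\}_{i\in\rX}$ with, say, $\rho=\rho_1$. Since $\{e\}$ and $\{e'\}$ are singleton observation tests, strong causality guarantees that for \emph{any} assignment $i\mapsto c^{(i)}\in\{e,e'\}$ the sequential composition of $\test{P}$ with the conditional tests $\{c^{(i)}\}$ forms a valid closed test of the theory, and its outcome probabilities must therefore sum to $1$. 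Picking the three assignments ``always $e$'', ``always $e'$'', and ``$e$ on the outcome $i=1$ and $e'$ on all others'' yields respectively
\begin{equation*}
\sum_{i\in\rX}\SC{e}{\rho_i}=1,\qquad \sum_{i\in\rX}\SC{e'}{\rho_i}=1,\qquad \SC{e}{\rho_1}+\sum_{i\neq 1}\SC{e'}{\rho_i}=1.
\end{equation*}
Subtracting the second identity from the third gives $\SC{e}{\rho_1}=\SC{e'}{\rho_1}$, i.e.\ $\SC{e}{\rho}_\rA=\SC{e'}{\rho}_\rA$. As $\rho$ was arbitrary, $e$ and $e'$ coincide as equivalence classes in $\Cntset(\rA)$.

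The only genuine obstacle is to recognize where strong causality, rather than mere closure under parallel and sequential composition, is indispensable: weak causality would supply only the two \emph{uniform} identities, whereas the decisive mixed identity demands that different conditional observation tests be wired onto different outcomes of $\test{P}$. This conditional wiring is exactly what strong causality authorises, and it is precisely what forbids two inequivalent deterministic effects from coexisting on the same system.
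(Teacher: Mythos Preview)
The paper does not actually supply its own proof of this proposition; it simply cites the result from Ref.~\cite{PhysRevA.81.062348} and moves on. Your argument is therefore not in competition with anything in the text, and it is correct: applying the strong-causality clause with the preparation test $\test{P}=\{\rho_i\}$ in the role of $\test{A}$ and the singleton observation tests $\{c^{(i)}\}$ in the role of $\test{P}^{(i)}$ produces a closed test whose probabilities $\SC{c^{(i)}}{\rho_i}$ must sum to~$1$, and your three-assignment subtraction cleanly isolates $\SC{e}{\rho_1}=\SC{e'}{\rho_1}$. The existence part via total coarse-graining is likewise standard.

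One terminological caveat on your final paragraph: in this paper, \emph{weak causality} is \emph{defined} to be the uniqueness of the deterministic effect (see the remark immediately preceding the proposition). So saying that ``weak causality would supply only the two uniform identities'' is circular in the paper's language. What you mean---and what is true---is that the bare OPT axioms (sequential/parallel closure without conditional wiring) yield only the uniform identities, while strong causality is what licenses the mixed one. It would be cleaner to phrase it that way rather than invoking ``weak causality'' as a hypothesis here.
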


For the remainder we will assume strong causality, unless otherwise stated. Accordingly, in the following we will 
assume that there exists a unique way to discard a system, the counterpart of the trace operator in quantum theory. 
Therefore, given a test 
$\test{A}\coloneqq \{\tA_i\}_{i\in\rX}\subseteq\Trnset{(\rA,\rB)}$ there exists a unique observation-test associated 
to it, which is the test $\test{a}\coloneqq\{a_i\}_{i\in\rX}\subseteq\Cntset{(\rA)}$ with events $\B{a_i}=\B{e}\tA_i$, 
and $\B{e}$ the deterministic effect of system $\rB$. We remind that in a (weakly) causal theory a transformation 
$\tC\in\Trnset{(\rA,\rB)}$ is a channel if and only if $\B{e}\tC=\B{e}$. Moreover, one can always safely extend a 
strongly causal theory so that one can always assume that every state is proportional to a deterministic one, 
including in the theory every state 
$\rho\in\Stset{(\rA)}$ such that $\SC{e}{\rho}=1$~\cite{Perinotti2020cellularautomatain}.

In a (weakly) causal theory, the uniqueness of the deterministic effect clearly introduces an asymmetry between input 
and output of a test or a transformation. In this context it is then reasonable to define a reversible dilation 
of a test $\{\tA_i\}_{i\in\rX}\subseteq\Trnset(\rA\to\rB)$ as follows: we say that $\{\tA_i\}_{i\in\rX}$ has a 
\emph{left-reversible dilation} if there exist a system $\tilde\rB$, a left-reversible transformation 
$\tV\in\Trnset(\rA\to\rB\tilde\rB)$, and an observation-test 
$\{a_i\}_{i\in\rX}\subseteq\Cntset(\tilde\rB)$, such that for every $i\in\rX$ one has
\begin{align}
\begin{aligned}
  &   \nonumber\Qcircuit @C=0.7em @R=1em 
    {&
       \qw\poloFantasmaCn{\rA}& \gate{\tA_i}& \qw\poloFantasmaCn{\rB}&\qw
                                                                       }
\end{aligned}\; =\; 
\begin{aligned}
    \Qcircuit @C=1em @R=.7em @! R { & \qw\poloFantasmaCn{\rA}
      &\multigate{1}{\tV} &\qw\poloFantasmaCn{\rB}
      &\qw\\
      &&\pureghost{\tV} &\qw\poloFantasmaCn{\tilde\rB}
      &\measureD{a_i} }
  \end{aligned}\,.
\end{align}
In the special case of a singleton test $\{\tA\}$, the above definition becomes that of a {reversible dilation} 
of channel $\tA$. Spelling out the definition, we see that $\tA$ has a left-reversible dilation if 
there exist a system $\tilde\rB$ and a left-reversible transformation 
$\tV\in\Trnset(\rA\to\rB\tilde\rB)$, such that
\begin{align}
\begin{aligned}
  &   \nonumber\Qcircuit @C=0.7em @R=1em 
    {&
       \qw\poloFantasmaCn{\rA}& \gate{\tA}& \qw\poloFantasmaCn{\rB}&\qw
                                                                       }
\end{aligned}\; =\; 
\begin{aligned}
    \Qcircuit @C=1em @R=.7em @! R { & \qw\poloFantasmaCn{\rA}
      &\multigate{1}{\tV} &\qw\poloFantasmaCn{\rB}
      &\qw\\
      &&\pureghost{\tV} &\qw\poloFantasmaCn{\tilde\rB}
      &\measureD{e} }
  \end{aligned}\,.
\end{align}

A related, but stronger notion is the following: we say that $\{\tA_i\}_{i\in\rX}$ has a 
\emph{reversible dilation} if there exist systems $\tilde\rA$ and $\tilde\rB$, a reversible transformation 
$\tU\in\Trnset(\rA\tilde\rA\to\rB\tilde\rB)$, a state $\eta\in\Stset(\tilde\rA)$, and an observation-test 
$\{a_i\}_{i\in\rX}\subseteq\Cntset(\tilde\rB)$, such that for every $i\in\rX$ one has
\begin{align}
\begin{aligned}
  &   \nonumber\Qcircuit @C=0.7em @R=1em 
    {&
       \qw\poloFantasmaCn{\rA}& \gate{\tA_i}& \qw\poloFantasmaCn{\rB}&\qw
                                                                       }
\end{aligned}\; =\; 
\begin{aligned}
    \Qcircuit @C=1em @R=.7em @! R { & \qw\poloFantasmaCn{\rA}
      &\multigate{1}{\tU} &\qw\poloFantasmaCn{\rB}
      &\qw\\
      \prepareC{\eta}&\qw\poloFantasmaCn{\tilde\rA}&\ghost{\tU} &\qw\poloFantasmaCn{\tilde\rB}
      &\measureD{a_i} }
  \end{aligned}\,.
\end{align}
Notice that a reversible dilation $(\tilde\rA,\tilde\rB,\eta,\tU,\{a_i\}_{i\in\rX})$ of an instrument 
$\{\tA_i\}_{i\in\rX}$ straightforwardly allows one to define a left-reversible one 
$(\tilde\rB,\tV,\{a_i\}_{i\in\rX})$, just taking
\begin{align}
\begin{aligned}
    \Qcircuit @C=1em @R=.7em @! R { & \qw\poloFantasmaCn{\rA}
      &\multigate{1}{\tV} &\qw\poloFantasmaCn{\rB}
      &\qw\\
      &&\pureghost{\tV} &\qw\poloFantasmaCn{\tilde\rB}
      &\qw}
  \end{aligned}\; \coloneqq\; 
\begin{aligned}
    \Qcircuit @C=1em @R=.7em @! R { & \qw\poloFantasmaCn{\rA}
      &\multigate{1}{\tU} &\qw\poloFantasmaCn{\rB}
      &\qw\\
      \prepareC{\eta}&\qw\poloFantasmaCn{\tilde\rA}&\ghost{\tU} &\qw\poloFantasmaCn{\tilde\rB}
      &\qw }
  \end{aligned}\,,
\end{align}
whose right-inverse $\tilde\tV$ is given e.g.~by
\begin{align}
\begin{aligned}
    \Qcircuit @C=1em @R=.7em @! R { & \qw\poloFantasmaCn{\rB}
      &\multigate{1}{\tilde\tV} &\qw\poloFantasmaCn{\rA}&\qw\\
      &\qw\poloFantasmaCn{\tilde\rB}&\ghost{\tilde\tV} &
      &}
  \end{aligned}\; \coloneqq\; 
\begin{aligned}
    \Qcircuit @C=1em @R=.7em @! R { & \qw\poloFantasmaCn{\rB}
      &\multigate{1}{\tU^{-1}} &\qw\poloFantasmaCn{\rA}
      &\qw\\
      &\qw\poloFantasmaCn{\tilde\rB}&\ghost{\tU^{-1}} &\qw\poloFantasmaCn{\tilde\rA}
      &\measureD{e} }
  \end{aligned}\,.
\end{align}

As anticipated, we can now introduce the notion of test post-processing, that is generally a conditioned test, 
possibly followed by a coarse-graining and/or the discarding of an ancillary system.

\begin{definition}(Test post-processing)
The test $\test{C}\coloneqq\{\tC_k\}_{k\in\rZ}\subseteq\Trnset{(\rA,\rC)}$ is a {\em post-processing} of the test 
$\test{A}\coloneqq \{\tA_i\}_{i\in\rX}\subseteq\Trnset{(\rA,\rB)}$ if there exists a collection of tests 
$\test{P}^{(i)}\coloneqq \{\tP^{(i)}_j\}_{j\in\rY}\subseteq\Trnset{(\rB,\rC\rC^\prime)}$, $i\in\rX$ and a partition of $\rX\times\rY$ into disjoint sets $\rW_k$ such that
\begin{align}\label{eq:cond-test}
 \Qcircuit @C=0.7em @R=1em 
    {&
       \qw\poloFantasmaCn{\rA}& \gate{\tC_{k}}& \qw\poloFantasmaCn{\rC}&\qw
                                                                       }
\ =  \sum_{(i,j)\in\rW_k}
\begin{aligned}
    \Qcircuit @C=1em @R=.7em @! R { 
      & \qw\poloFantasmaCn{\rA}
      &\gate{{\tA}_{i}} 
      &\qw\poloFantasmaCn{\rB}
      &\multigate{1}{{\tP_j^{(i)}}} &\qw\poloFantasmaCn{\rC}&\qw\\
      & 
      &
      & 
      &\pureghost{{\tP_j^{(i)}}}&\qw\poloFantasmaCn{\rC^\prime}&\measureD{e} 
      }
  \end{aligned}\; .
\end{align}
\end{definition}
We stress that assuming only the framework of operational probabilistic theories, without further requirements, a post-processing of a test would not necessarily correspond to a test of the theory, while for strongly causal theories any post-processing is admissible, since the conditioned tests $\test{B}\coloneqq\{\tB_{i,j}\}_{(i,j)\in\rX\times\rY}\subseteq{\Trnset{(\rA,\rC\rC^\prime)}}$ in Eq.~\eqref{eq:cond-test} is a test of the theory. 
%
%
%
%
%

\section{Compatible tests}\label{sec:III}

We now introduce a first operational notion of compatibility of tests, that we deem strong compatibility, based on 
the idea that two tests are compatible if there exists a procedure, namely another test, for their joint realization. 
Strong compatibility extends the compatibility of quantum channels studied in Refs.~\cite{Teiko-incompatibility-2016,Teiko-incompatibility-2017} to tests of an arbitrary theory.

\begin{definition}[Strong compatibility]\label{def:strong}
 We say that the tests  $\test{A}\coloneqq \{\tA_i\}_{i\in\rX}\subseteq\Trnset{(\rA,\rB)}$ and $\test{B}\coloneqq\{\tB_j\}_{j\in\rY}\subseteq\Trnset{(\rA,\rC)}$ are {\em strongly compatible}, and we write $\test{A}\Join\test{B}$, if there exists a test $\test{C}\coloneqq \{\tC_{k}\}_{k\in\rZ}\subseteq\Trnset{(\rA,\rB\rC)}$ such that for every $i\in\rX$ and $j\in\rY$ one has
\begin{subequations}
\begin{align}\label{eq:strong-a}
& \Qcircuit @C=0.7em @R=1em 
    {&
       \qw\poloFantasmaCn{\rA}& \gate{\tA_i}& \qw\poloFantasmaCn{\rB}&\qw
                                                                       }
                                                                       \; =
  \sum_{k\in\rV_i} 
  \begin{aligned}
    \Qcircuit @C=1em @R=.7em @! R { 
    & \qw\poloFantasmaCn{\rA}
      &\multigate{1}{{\tC}_{k}} &\qw\poloFantasmaCn{\rB}&\qw
      \\
      & & \pureghost{{\tC}_{k}} &\qw\poloFantasmaCn{\rC}
      &\measureD{e} }
  \end{aligned},\\\label{eq:strong-b}
& \Qcircuit @C=0.7em @R=1em 
    {&
       \qw\poloFantasmaCn{\rA}& \gate{\tB_j}& \qw\poloFantasmaCn{\rC}&\qw
                                                                       }
                                                                       \; =
  \sum_{k\in\rW_j} 
  \begin{aligned}
    \Qcircuit @C=1em @R=.7em @! R { 
    & \qw\poloFantasmaCn{\rA}
      &\multigate{1}{{\tC}_{k}} &\qw\poloFantasmaCn{\rB}&\measureD{e} 
      \\
      & & \pureghost{{\tC}_{k}} &\qw\poloFantasmaCn{\rC}
      &\qw}
  \end{aligned}
\end{align}
\end{subequations}
for two partitions of $\rZ$ into disjoint subsets $\rV_i$, $i\in\rX$ and $\rW_j$, $j\in\rY$, respectively. 
\end{definition}
The following proposition shows that, without loss of generality, the test that jointly realizes two strongly 
compatible tests can be chosen as a double-outcome test, whose outcomes are pairs with elements corresponding 
respectively to the outcomes of the compatible tests.

\begin{proposition}[Strong compatibility]
Two tests $\test{A}\coloneqq \{\tA_i\}_{i\in\rX}\subseteq\Trnset{(\rA,\rB)}$ and $\test{B}\coloneqq\{\tB_j\}_{j\in\rY}\subseteq\Trnset{(\rA,\rC)}$ are strongly compatible if and only if there exists a test $\test{D}\coloneqq\{D_{i,j}\}_{(i,j)\in \rX\times\rY}\subseteq\Trnset{(\rA,\rB\rC)}$, such that 
\begin{subequations}
\begin{align}\label{eq:strong-a-bis}
& \Qcircuit @C=0.7em @R=1em 
    {&
       \qw\poloFantasmaCn{\rA}& \gate{\tA_i}& \qw\poloFantasmaCn{\rB}&\qw
                                                                       }
                                                                       \; =
  \sum_{j} 
  \begin{aligned}
    \Qcircuit @C=1em @R=.7em @! R { 
    & \qw\poloFantasmaCn{\rA}
      &\multigate{1}{{\tD}_{i,j}} &\qw\poloFantasmaCn{\rB}&\qw
      \\
      & & \pureghost{{\tD}_{i,j}} &\qw\poloFantasmaCn{\rC}
      &\measureD{e} }
  \end{aligned},\\\label{eq:strong-b-bis}
& \Qcircuit @C=0.7em @R=1em 
    {&
       \qw\poloFantasmaCn{\rA}& \gate{\tB_j}& \qw\poloFantasmaCn{\rC}&\qw
                                                                       }
                                                                       \; =
  \sum_{i} 
  \begin{aligned}
    \Qcircuit @C=1em @R=.7em @! R { 
    & \qw\poloFantasmaCn{\rA}
      &\multigate{1}{{\tD}_{i,j}} &\qw\poloFantasmaCn{\rB}&\measureD{e} 
      \\
      & & \pureghost{{\tD}_{i,j}} &\qw\poloFantasmaCn{\rC}
      &\qw}
  \end{aligned}\,.
\end{align}
\end{subequations}
\end{proposition}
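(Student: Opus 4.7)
The proposition is essentially a combinatorial repackaging of Definition~\ref{def:strong}, so my plan is to verify both implications by direct construction, with no substantive obstacle beyond bookkeeping with partitions.

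For the "if" direction, I would start from a test $\test{D}=\{\tD_{i,j}\}_{(i,j)\in\rX\times\rY}$ satisfying \eqref{eq:strong-a-bis}--\eqref{eq:strong-b-bis} and take $\rZ\coloneqq\rX\times\rY$, $\tC_{i,j}\coloneqq\tD_{i,j}$, together with the obvious partitions $\rV_i\coloneqq\{i\}\times\rY$ and $\rW_j\coloneqq\rX\times\{j\}$. The conditions \eqref{eq:strong-a}--\eqref{eq:strong-b} then reduce verbatim to \eqref{eq:strong-a-bis}--\eqref{eq:strong-b-bis}, yielding $\test{A}\Join\test{B}$.

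For the "only if" direction, suppose $\test{A}\Join\test{B}$ via a test $\test{C}=\{\tC_k\}_{k\in\rZ}\subseteq\Trnset(\rA,\rB\rC)$ with partitions $\{\rV_i\}_{i\in\rX}$ and $\{\rW_j\}_{j\in\rY}$ of $\rZ$ satisfying \eqref{eq:strong-a}--\eqref{eq:strong-b}. I define
\begin{equation*}
\tD_{i,j}\coloneqq\sum_{k\in\rV_i\cap\rW_j}\tC_k, \qquad (i,j)\in\rX\times\rY.
\end{equation*}
Since both $\{\rV_i\}_i$ and $\{\rW_j\}_j$ are partitions of $\rZ$, each index $k\in\rZ$ lies in a unique pair $(\rV_i,\rW_j)$, so the family $\{\rV_i\cap\rW_j\}_{(i,j)\in\rX\times\rY}$ is itself a partition of $\rZ$. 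Hence $\test{D}\coloneqq\{\tD_{i,j}\}_{(i,j)\in\rX\times\rY}$ is a coarse-graining of $\test{C}$, and therefore a bona fide test of the theory.

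It remains to check the marginal identities. For any fixed $i\in\rX$, using that $\{\rV_i\cap\rW_j\}_{j\in\rY}$ is a partition of $\rV_i$, one has
\begin{equation*}
\sum_{j\in\rY}\B{e}_\rC\tD_{i,j}=\sum_{j\in\rY}\sum_{k\in\rV_i\cap\rW_j}\B{e}_\rC\tC_k=\sum_{k\in\rV_i}\B{e}_\rC\tC_k=\tA_i,
\end{equation*}
where the last equality is \eqref{eq:strong-a}; this establishes \eqref{eq:strong-a-bis}. The symmetric computation using that $\{\rV_i\cap\rW_j\}_{i\in\rX}$ partitions $\rW_j$, together with \eqref{eq:strong-b}, yields \eqref{eq:strong-b-bis}. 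The whole argument is purely combinatorial, and strong causality is used only implicitly, via the fact that coarse-grainings of tests are tests.
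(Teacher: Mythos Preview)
Your proof is correct and follows essentially the same route as the paper's: both directions hinge on defining $\tD_{i,j}\coloneqq\sum_{k\in\rV_i\cap\rW_j}\tC_k$ as a coarse-graining of $\test{C}$, with the converse being the trivial specialization $\rZ=\rX\times\rY$. Your version is simply more explicit about the partition bookkeeping and the marginal checks than the paper's terse ``it is straightforward'' and ``the converse implication is trivial.''
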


\begin{proof}
Since $\test{A}$ and $\test{B}$ are strongly compatible then according to Definition~\ref{def:strong}, Eqs.~\eqref{eq:strong-a} and~\eqref{eq:strong-b} hold for a test $\test{C}\coloneqq \{\tC_{k}\}_{k\in\rZ}\subseteq\Trnset{(\rA,\rB\rC)}$. 
Consider now the disjoint partition of $\rZ$ given by $\cup_{i,j}\rU_{ij}=\rZ$, with $\rU_{ij}=\rV_i\cap\rW_j$, and define the test 
$\test{D}\coloneqq\{\tD_{i,j}\}_{(i,j)\in\rX\times\rY}\subseteq{(\rA,\rB\rC)}$, with
\begin{equation*}
\tD_{i,j}=\sum_{k\in\rV_i\cap\rW_j}\tC_k.
\end{equation*}
It is straightforward to see that $\test D$ satisfies Eqs.~\eqref{eq:strong-a-bis} and~\eqref{eq:strong-b-bis}.
The converse implication is trivial.
\end{proof}

Another operational notion of compatibility of tests, that we will show to be strictly weaker than the previous one, is proposed in the following. We say two tests $\test{A}$ and $\test{B}$ are compatible if there exists a realization of $\test{A}$ that can be post-processed to get $\test{B}$, and there exists a realization of $\test{B}$ that can be post-processed to get $\test{A}$. In other words it is possible to realize one of the two tests leaving open the possibility of deciding later to make the other one. 
  
\begin{definition}\label{def:weak}
We say that the test $\test{A}\coloneqq \{\tA_i\}_{i\in\rX}\subseteq\Trnset{(\rA,\rB)}$ {\em does not exclude} the 
test $\test{B}\coloneqq\{\tB_j\}_{j\in\rY}\subseteq\Trnset{(\rA,\rC)}$, and write $\test{A}\rightarrow\test{B}$, if 
one has
\begin{subequations}
 \begin{align}\label{eq:weak-a} 
& \Qcircuit @C=0.7em @R=1em 
    {&
       \qw\poloFantasmaCn{\rA}& \gate{\tA_i}& \qw\poloFantasmaCn{\rB}&\qw
                                                                       }
                                                                       \; =
  \sum_{k\in\rV_i} 
  \begin{aligned}
    \Qcircuit @C=1em @R=.7em @! R { 
    & \qw\poloFantasmaCn{\rA}
      &\multigate{1}{{\tC}_{k}} &\qw\poloFantasmaCn{\rB}&\qw
      \\
      & & \pureghost{{\tC}_{k}} &\qw\poloFantasmaCn{\rB'}
      &\measureD{e} }
  \end{aligned},\\\label{eq:weak-b}
& \Qcircuit @C=0.7em @R=1em 
    {&
       \qw\poloFantasmaCn{\rA}& \gate{\tB_j}& \qw\poloFantasmaCn{\rC}&\qw
                                                                       }
                                                                       \; =
  \sum_{(k,l)\in\rW_j} 
  \begin{aligned}
    \Qcircuit @C=1em @R=.7em @! R { 
      & \qw\poloFantasmaCn{\rA}
      &\multigate{1}{{\tC}_{k}} &\qw\poloFantasmaCn{\rB}
      &\multigate{1}{{\tP_l^{(k)}}}  &\qw\poloFantasmaCn{\rC}
      &\qw
      \\
      & 
      & \pureghost{{\tC}_{k}} &\qw\poloFantasmaCn{\rB'}
      &  \ghost{{\tP_l^{(k)}}} &\qw\poloFantasmaCn{\rC'}
      &\measureD{e}}
  \end{aligned}\,
\end{align}
\end{subequations}
for some test $\test{C}\coloneqq \{\tC_{k}\}_{k\in\rZ}\subseteq\Trnset{(\rA,\rB\rB')}$, post-processing tests 
$\test{P}^{(k)}\coloneqq \{\tP_l^{(k)}\}_{l\in\rL}\subseteq\Trnset{(\rB\rB',\rC\rC')}$, and disjoint partions $\rV_i, i\in\rX$ and $\rW_j,j\in\rY$ of $\rZ$ and $\rZ\times\rL$ respectively. 
\end{definition}

The above notion allows us now to define weak compatibility as follows.

\begin{definition}[Weak compatibility]\label{def:weak-fin}
We say that $\test{A}$ and $\test{B}$ are {\em weakly-compatible}, and write $\test{A}\leftrightarrow\test{B}$, if 
$\test A$ does not exclude $\test B$, and $\test B$ does not exclude $\test A$.
\end{definition}

Notice that the condition $\test{A}\rightarrow\test{B}$ does not imply $\test{B}\rightarrow\test{A}$. Moreover, given 
two weakly compatible tests $\test{A}\leftrightarrow\test{B}$, the test $\test{C}$ used for realizing $\test{A}$, and 
subsequently post-processing it to $\test{B}$ ($\test{A}\rightarrow\test{B}$), may be different from the test 
$\test{C}^\prime$ used to first realize $\test{B}$ and then  $\test{A}$ ($\test{B}\rightarrow\test{A}$). This is in 
the spirit of keeping the present notion of compatibility condition the weakest possible.

\subsection{Comparing strong and weak compatibility}

In this section we compare the two definitions of compatibility provided so far. First we prove that strong-compatibility always implies weak-compatibility.

\begin{proposition}\label{prop:stron-weak}  
Given two tests  $\test{A}\coloneqq \{\tA_i\}_{i\in\rX}\subseteq\Trnset{(\rA,\rB)}$ and $\test{B}\coloneqq\{\tB_j\}_{j\in\rY}\subseteq\Trnset{(\rA,\rC)}$, if they are strongly compatible then they are also weakly compatible.
\end{proposition}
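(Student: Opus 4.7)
The plan is to show that a strong compatibility witness $\test{C}$ can itself serve, with only trivial post-processings, as a witness for both one-way non-exclusion relations $\test{A}\rightarrow\test{B}$ and $\test{B}\rightarrow\test{A}$ required by Definition~\ref{def:weak-fin}.

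Assume $\test{A}\Join\test{B}$, so there exists $\test{C}=\{\tC_k\}_{k\in\rZ}\subseteq\Trnset(\rA,\rB\rC)$ and partitions $\{\rV_i\}_{i\in\rX}$, $\{\rW_j\}_{j\in\rY}$ of $\rZ$ satisfying Eqs.~\eqref{eq:strong-a} and~\eqref{eq:strong-b}. To establish $\test{A}\rightarrow\test{B}$, I would choose in Definition~\ref{def:weak} the auxiliary system $\rB'\coloneqq\rC$ and the very same test $\test{C}$. Then Eq.~\eqref{eq:weak-a} with partition $\rV_i$ is literally Eq.~\eqref{eq:strong-a}. For the post-processing I would take, for each $k$, the singleton (deterministic) test $\test{P}^{(k)}$ whose only event is the transformation $\tI_\rC\otimes\B{e}_\rB$ from $\rB\rB'=\rB\rC$ to $\rC$ (with trivial $\rC'$); this is an admissible test by strong causality. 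Labelling the unique outcome $l_0$ and setting $\rW_j'\coloneqq\rW_j\times\{l_0\}$, Eq.~\eqref{eq:weak-b} reduces to $\tB_j=\sum_{k\in\rW_j}(\B{e}_\rB\otimes\tI_\rC)\tC_k$, which is exactly Eq.~\eqref{eq:strong-b}.

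For $\test{B}\rightarrow\test{A}$ I would proceed symmetrically: taking the auxiliary system $\rB'\coloneqq\rB$ (so that the output $\rB\rC$ of $\test{C}$ is read as $\rC\rB'$ in the role required by Definition~\ref{def:weak}), Eq.~\eqref{eq:strong-b} becomes Eq.~\eqref{eq:weak-a} for $\test{B}$ with the partition $\rW_j$. The post-processing is now the singleton test with event $\tI_\rB\otimes\B{e}_\rC$, and Eq.~\eqref{eq:strong-a} then plays the role of Eq.~\eqref{eq:weak-b} for $\test{A}$. Combining the two one-way relations yields $\test{A}\leftrightarrow\test{B}$.

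There is essentially no hard step: the argument is that strong compatibility already provides a joint realization on $\rB\rC$, while weak compatibility only asks for a realization of one test plus a post-processing recovering the other, and the canonical "trivial" post-processing of discarding the irrelevant output is available in any strongly causal theory. The one point to be mindful of is to check that the disjoint partitions demanded by Definition~\ref{def:weak} are inherited from those of the strong witness (so that the sums in Eqs.~\eqref{eq:weak-a} and~\eqref{eq:weak-b} are indexed over genuinely disjoint subsets of $\rZ$ and $\rZ\times\rL$), which follows because $\{\rV_i\}$ and $\{\rW_j\}$ are already disjoint partitions of $\rZ$ and the post-processing test has a single outcome.
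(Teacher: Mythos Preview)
Your proposal is correct and follows essentially the same route as the paper: both use the strong-compatibility witness $\test C$ itself and a trivial post-processing that merely discards the unwanted output. The only cosmetic difference is that the paper packages the post-processing as the swap $\tS\in\Trnset(\rB\rC,\rC\rB)$ followed by discarding $\rB$ (so that $\rC'=\rB$), whereas you discard $\rB$ directly and take $\rC'$ trivial; these are the same operation.

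One small point of precision: in the $\test B\rightarrow\test A$ direction you write that the output $\rB\rC$ of $\test C$ ``is read as $\rC\rB'$''. In an OPT the order of factors in a composite system is part of the data, so strictly speaking you should compose $\test C$ with the swap to obtain a test with output $\rC\rB$ before invoking Definition~\ref{def:weak}. This is exactly why the paper introduces the swap explicitly; once you insert it, your argument goes through verbatim.
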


\begin{proof}
By hypothesis $\test{A}\Join\test{B}$. According to Definition~\ref{def:strong} there exists a test $\test{C}\coloneqq\{C_{k}\}_{k\in\rZ}\subseteq\Trnset{(\rA,\rB\rC)}$ for the joint realization of $\test{A}$ and $\test{B}$ as in Eqs.~\eqref{eq:strong-a} and~\eqref{eq:strong-b}, respectively. Now we prove that it is also $\test{A}\leftrightarrow{B}$. Let us start with  $\test{A}\rightarrow{B}$ and show that Eqs.~\eqref{eq:weak-a} and~\eqref{eq:weak-b} in Definition~\ref{def:weak} can be satisfied. First one can rewrite Eqs.~\eqref{eq:strong-a} and~\eqref{eq:strong-b} as follows
\begin{subequations}\nonumber
 \begin{align} 
& \Qcircuit @C=0.7em @R=1em 
    {&
       \qw\poloFantasmaCn{\rA}& \gate{\tA_i}& \qw\poloFantasmaCn{\rB}&\qw
                                                                       }
                                                                       \; =
  \sum_{k\in\rV_i} 
  \begin{aligned}
    \Qcircuit @C=1em @R=.7em @! R { 
    & \qw\poloFantasmaCn{\rA}
      &\multigate{1}{{\tC}_{k}} &\qw\poloFantasmaCn{\rB}&\qw
      \\
      & & \pureghost{{\tC}_{k}} &\qw\poloFantasmaCn{\rC}
      &\measureD{e} }
  \end{aligned},\\
& \Qcircuit @C=0.7em @R=1em 
    {&
       \qw\poloFantasmaCn{\rA}& \gate{\tB_j}& \qw\poloFantasmaCn{\rC}&\qw
                                                                       }
                                                                       \; =
  \sum_{k\in\rW_j} 
  \begin{aligned}
    \Qcircuit @C=1em @R=.7em @! R { 
      & \qw\poloFantasmaCn{\rA}
      &\multigate{1}{{\tC}_{k}} &\qw\poloFantasmaCn{\rB}
      &\multigate{1}{{\tS}}  &\qw\poloFantasmaCn{\rC}
      &\qw
      \\
      & 
      & \pureghost{{\tC}_{k}} &\qw\poloFantasmaCn{\rC}
      &  \ghost{{\tS}} &\qw\poloFantasmaCn{\rB}
      &\measureD{e}}
  \end{aligned}\,
\end{align}
\end{subequations}
where $\tS\in\Trnset{(\rB\rC,\rB\rC)}$ is the deterministic transformation swapping systems $\rB$ and $\rC$. Then we observe that the two equations above also give instances of Eqs.~\eqref{eq:weak-a} and~\eqref{eq:weak-b}. The first equation coincides with Eq.~\eqref{eq:weak-a} upon identyfing system $\rB'$ with system $\rC$ and system $\rC'$ with system $\rB$. The second is a special case of Eq.~\eqref{eq:weak-b} where the post-processing tests $\test{P}^{(k)}$ are all chosen equal to the singleton test made of the swap gate $\tS$. Accordingly, one has $\rZ\times\rL=\rZ$, since $\rL$ is trivial, and the disjoint partition $\rW_j,j\in\rY$ of $\rZ$, is thus a disjoint partition of $\rZ\times \rL$ as well.
Analogously, one can verify that  $\test{B}\rightarrow\test{A}$ holds.
\end{proof}

We continue the comparison in the special cases of observation tests, say tests having classical outcomes, and for channels, say tests having non trivial output system but only one possible outcome.

\subsubsection{The case of observables}

Using the definitions presented in this section one can study compatibility between observation tests as a special case. In quantum theory observation tests are characterized by having a nontrivial input system, but a trivial output one, thus only producing a classical outcome. This case is different from instruments, where both the input and the output are non-trivial, and thus one has both a quantum output and a classical outcome. Also for an arbitrary theory  observation tests only produce classical outcomes. The scope of this section is to derive the consequences of  this simplification on the relation between strong and weak compatibility for observation tests.

The following proposition shows the collapse of the two notions of compatibility for two observation tests, and the 
equivalence of both conditions with the the possibility of detecting simultaneously the outcomes of the two tests.

\begin{proposition}[Compatible observation tests]\label{prop:compatible-observation-tests}
  For two observation tests $\test{A}\coloneqq \{a_i\}_{i\in\rX}\in\Cntset{(\rA)}$ and
  $\test{B}\coloneqq\{b_j\}_{j\in\rY}\in\Cntset{(\rA)}$ the following statements are equivalent
  \begin{enumerate}
  \item\label{obs:1} $\test{A}\Join \test{B}$, namely $\test{A}$ and $\test{B}$ are strongly-compatible
  \item\label{obs:2} $\test{A}\leftrightarrow \test{B}$, namely $\test{A}$ and $\test{B}$ are weakly-compatible
  \item\label{obs:3} $\test{A}\rightarrow \test{B}$ (or $\test{B}\rightarrow \test{A}$)
  \item\label{obs:4} There exists an oservation test $\test{C}\coloneqq\{c_{i,j}\}_{(i,j)\in\rX\times\rY}\subseteq\Cntset{(\rA)}$ such that 
\begin{equation}\label{eq:compatible-obs}
\begin{aligned}
&  \Qcircuit @C=0.7em @R=1em 
  {&
  \qw\poloFantasmaCn{\rA} &\measureD{a_i}
                            }
                            \; = \;
                            \sum_{j\in\rY}
\begin{aligned}
  \Qcircuit @C=1em @R=.7em @! R {&\qw\poloFantasmaCn{\rA}&\measureD{c_{i,j}} }
    \end{aligned}\,,
\\
&  \Qcircuit @C=0.7em @R=1em 
  {&
  \qw\poloFantasmaCn{\rA}& \measureD{b_j}
                                         }
                                         \; = \;
\sum_{i\in\rX}
\begin{aligned}
  \Qcircuit @C=1em @R=.7em @! R {& 
\qw\poloFantasmaCn{\rA}&\measureD{c_{i,j}}
  }
    \end{aligned}.
\end{aligned}
\end{equation}
\end{enumerate}
\end{proposition}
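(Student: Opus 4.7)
The plan is to close the cycle $(\ref{obs:1}) \Rightarrow (\ref{obs:2}) \Rightarrow (\ref{obs:3}) \Rightarrow (\ref{obs:4}) \Rightarrow (\ref{obs:1})$. The first two links come for free: $(\ref{obs:1}) \Rightarrow (\ref{obs:2})$ is exactly Proposition~\ref{prop:stron-weak}, and $(\ref{obs:2}) \Rightarrow (\ref{obs:3})$ is just Definition~\ref{def:weak-fin}. For $(\ref{obs:4}) \Rightarrow (\ref{obs:1})$ I would observe that since both $\test{A}$ and $\test{B}$ have trivial output, the joint test $\{\tD_{i,j}\}$ required by Eqs.~\eqref{eq:strong-a-bis}--\eqref{eq:strong-b-bis} has output $\rB\rC = \rI$; the discarding effects on the right-hand sides therefore act as the scalar $1$, and Eq.~\eqref{eq:compatible-obs} becomes literally Eqs.~\eqref{eq:strong-a-bis}--\eqref{eq:strong-b-bis} upon setting $\tD_{i,j} \coloneqq c_{i,j}$.

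The substantive step is $(\ref{obs:3}) \Rightarrow (\ref{obs:4})$. Assume $\test{A} \rightarrow \test{B}$ (the other case is symmetric in $\test{A}$ and $\test{B}$, and the conclusion~(\ref{obs:4}) is itself symmetric). Since the outputs of $\test{A}$ and $\test{B}$ are trivial, the test $\test{C} = \{\tC_k\}_{k \in \rZ}$ of Definition~\ref{def:weak} lies in $\Trnset(\rA, \rB^\prime)$ and the post-processing tests $\tP^{(k)}_l$ lie in $\Trnset(\rB^\prime, \rC^\prime)$. By strong causality, the events
\begin{equation*}
\tD_{k,l} \coloneqq \tP^{(k)}_l \tC_k
\end{equation*}
form a test from $\rA$ to $\rC^\prime$, whose discarded version $d_{k,l} \coloneqq \B{e}\tD_{k,l}$ is an effect test on $\rA$. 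I would then set
\begin{equation*}
c_{i,j} \coloneqq \sum_{k \in \rV_i,\; (k,l) \in \rW_j} d_{k,l},
\end{equation*}
which is an observation test because it is a coarse-graining of $\{d_{k,l}\}$. Summing over $j$ collapses the constraint $(k,l) \in \rW_j$ into all $l \in \rL$ (as $\rW_j$ partitions $\rZ \times \rL$); using that $\sum_l \tP^{(k)}_l$ is a channel and hence $\B{e}(\sum_l \tP^{(k)}_l) = \B{e}$, then applying Eq.~\eqref{eq:weak-a}, one gets $\sum_j c_{i,j} = \sum_{k \in \rV_i} \B{e}\tC_k = a_i$. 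Summing over $i$ collapses $k \in \rV_i$ into all of $\rZ$ (as $\rV_i$ partitions $\rZ$), yielding $\sum_i c_{i,j} = \sum_{(k,l) \in \rW_j} d_{k,l} = b_j$ directly from Eq.~\eqref{eq:weak-b}.

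The only real obstacle is the index bookkeeping: the partitions $\rV_i$ and $\rW_j$ live on different sets ($\rZ$ versus $\rZ \times \rL$), so one must verify that the joint condition $k \in \rV_i,\, (k,l) \in \rW_j$ induces a genuine partition of $\rZ \times \rL$ indexed by $(i,j) \in \rX \times \rY$. This is automatic because each $(k,l)$ determines a unique $i$ via $k$ and a unique $j$ via $(k,l)$. Once the indexing is straightened out, the triviality of the output systems of observation tests makes all the remaining diagrammatic manipulations degenerate into the two scalar identities above.
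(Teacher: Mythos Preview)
Your proposal is correct and follows essentially the same approach as the paper: the same cycle of implications, the same use of Proposition~\ref{prop:stron-weak} and Definition~\ref{def:weak-fin} for the easy steps, and the same construction for $(\ref{obs:3})\Rightarrow(\ref{obs:4})$ via the joint observation test obtained by discarding the output of $\tP^{(k)}_l\tC_k$ and coarse-graining over $(\rV_i\times\rL)\cap\rW_j$. The paper's $f_{k,l}$ is your $d_{k,l}$, and its $c_{i,j}=\sum_{(k,l)\in(\rV_i\times\rL)\cap\rW_j}f_{k,l}$ is exactly your formula; your explicit remark that the two partitions combine into a partition of $\rZ\times\rL$ is the only point the paper leaves implicit.
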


\begin{proof}
We show the equivalence proving the chain of implications $\ref{obs:1}\Rightarrow\ref{obs:2}\Rightarrow\ref{obs:3}\Rightarrow\ref{obs:4}\Rightarrow\ref{obs:1}$. 
The implication $\ref{obs:1}\Rightarrow\ref{obs:2}$ follows from Proposition~\ref{prop:stron-weak}. 
The implication $\ref{obs:2}\Rightarrow\ref{obs:3}$ is trivial, since $\test{A}\leftrightarrow\test{B}$ if and only if it is both $\test{A}\rightarrow\test{B}$ and $\test{B}\rightarrow\test{A}$.
Consider now the implication $\ref{obs:3}\Rightarrow\ref{obs:4}$. If $\test{A}\rightarrow\test{B}$ then 
by Definition~\ref{def:weak} there exist a test $\test{D}\coloneqq\{\tD_k\}_{k\in\rZ}\subseteq\Trnset{(\rA\to\rB)}$ 
and for every $k\in\rZ$ a post-processing test 
$\test{P}^{(k)}\coloneqq \{\tP_l^{(k)}\}_{l\in\rL}\subseteq\Trnset{(\rB\to\rC)}$ such that
\begin{subequations}
 \begin{align}\label{eq:obs1}
 \Qcircuit @C=0.7em @R=1em 
  {&
  \qw\poloFantasmaCn{\rA} &\measureD{a_i}
                            }
                            \; &= \;
  \sum_{k\in\rV_i} 
  \begin{aligned}
    \Qcircuit @C=1em @R=.7em @! R { 
    & \qw\poloFantasmaCn{\rA}
      &\gate{{\tD}_{k}} &\qw\poloFantasmaCn{\rB}&\measureD{e}}
  \end{aligned},\\\label{eq:obs2}
 \Qcircuit @C=0.7em @R=1em 
  {&
  \qw\poloFantasmaCn{\rA} &\measureD{b_j}
                            }
                            \; &= \;   
  \sum_{(k,l)\in\rW_j} 
  \begin{aligned}
    \Qcircuit @C=1em @R=.7em @! R { 
      & \qw\poloFantasmaCn{\rA}
      &\gate{{\tD}_{k}} &\qw\poloFantasmaCn{\rB}
      &\gate{{\tP_l^{(k)}}}  &\qw\poloFantasmaCn{\rC}
      &\measureD{e}}
  \end{aligned}\,
\end{align}
\end{subequations}
for 
suitable disjoint partitions $\{\rV_i\}_{i\in\rX}$ of $\rZ$ and $\{\rW_{j}\}_{j\in\rY}$ of $\rZ\times\rL$. 
Now consider the observation test $\test{F}\coloneqq\{f_{k,l}\}_{(k,l)\in\rX\times\rY}$ defined as follows
\begin{align}\nonumber
 \Qcircuit @C=0.7em @R=1em 
  {&
  \qw\poloFantasmaCn{\rA} &\measureD{f_{k,l}}
                            }
                            \; &\coloneqq \;
\begin{aligned}
    \Qcircuit @C=1em @R=.7em @! R { 
      & \qw\poloFantasmaCn{\rA}
      &\gate{{\tD}_{k}} &\qw\poloFantasmaCn{\rB}
      &\gate{{\tP_l^{(k)}}}  &\qw\poloFantasmaCn{\rC}
      &\measureD{e}}
  \end{aligned}.
\end{align}
On the one hand by Eq.~\eqref{eq:obs2} it immediatly follows that $\sum_{(k,l)\in\rW_j} \B{f_{k,l}}=\B{b_j}$. On the other hand one has
\begin{align}\nonumber
\sum_{(k,l)\in\rV_i\times \rL} \Qcircuit @C=0.7em @R=1em 
  {&
  \qw\poloFantasmaCn{\rA} &\measureD{f_{k,l}}}\ &= \;
	\sum_{k\in\rV_i}
	\begin{aligned}
		\Qcircuit @C=1em @R=.7em @! R { 
      		& \qw\poloFantasmaCn{\rA}
      		&\gate{{\tD}_{k}} &\qw\poloFantasmaCn{\rB}
      		&\measureD{e}}
	\end{aligned}\\
&  = \;
   \Qcircuit @C=0.7em @R=1em 
  {&
  \qw\poloFantasmaCn{\rA} &\measureD{a_i}
                            }
\end{align}
where the first equality follows from Eq.~\eqref{eq:obs2} reminding that $\sum_l\tP^{(k)}_l$ are channels and then 
preserve the deterministic effect. Thus we have two different coarse-grainings of the same test $\test{F}$, given by  
$\sum_{(k,l)\in\rV_i\times\rL}  \B{f_{k,l}}=\B{a_i}$ and $\sum_{(k,l)\in\rW_j} \B{f_{k,l}}=\B{b_j}$, corresponding to 
the two observation tests $\test{A}$ and $\test{B}$, respectively. We can now simply define the observation test 
$\test{C}\coloneqq\{c_{i,j}\}_{(i,j)\in\rX\otimes\rY}\in\Cntset{(\rA)}$ with 
$\B{c_{i,j}}=\sum_{(k,l)\in(\rV_i\times\rL)\cap\rW_j}\B{f_{k,l}}$ such that $\ref{obs:4}$ is satisfied. This proves that 
$\test{A}\rightarrow\test{B}$ implies $\ref{obs:4}$ and exactly the same argument works assuming 
$\test{B}\rightarrow\test{A}$ in place of of $\test{A}\rightarrow\test{B}$. 
We conclude the proof showing that \ref{obs:4}$\Rightarrow$\ref{obs:1}. In the case of interest the condition in 
Eq.~\eqref{eq:strong-a-bis} simplifies, being systems $\rB$ and $\rC$ trivial. Thus, the condition boils down to 
having an observation test $\{c_{i,j}\}_{(i,j)\in\rX\times\rY}$ whose marginal coarse-grainings in 
Eq.~\eqref{eq:strong-a-bis} are equal to $a_i$ and $b_j$, respectively. However, this is precisely the hypothesis in 
item~\ref{obs:4}.
\end{proof}

As a final remark in this subsection, we notice that two observation tests are compatible if and only if they are the measurements associated to two compatible instruments, both in the weak and in the strong sense: 

\begin{proposition} 
The four conditions in Proposition~\ref{prop:compatible-observation-tests} are also equivalent to the following two
\begin{enumerate}
\setcounter{enumi}{4}
\item \label{it:strongtest}There exist two tests $\tilde{\test{A}}\coloneqq\{\tA_i\}_{i\in\rX}\subseteq\Trnset{(\rA,\rB)}$ and $\tilde{\test{B}}\coloneqq\{\tB_j\}_{j\in\rY}\subseteq\Trnset{(\rA,\rC)}$, with $\B{a_i}\coloneqq\B{e}\tA_i$ and $\B{b_j}\coloneqq\B{e}\tB_j$ such that $\tilde{\test{A}}\Join\tilde{\test{B}}$;
\item \label{it:weaktest}There exist two tests $\tilde{\test{A}}\coloneqq\{\tA_i\}_{i\in\rX}\subseteq\Trnset{(\rA,\rB)}$ and $\tilde{\test{B}}\coloneqq\{\tB_j\}_{j\in\rY}\subseteq\Trnset{(\rA,\rC)}$, with $\B{a_i}\coloneqq\B{e}\tA_i$ and $\B{b_j}\coloneqq\B{e}\tB_j$ such that $\tilde{\test{A}}\leftrightarrow\tilde{\test{B}}$.
\end{enumerate}
\end{proposition}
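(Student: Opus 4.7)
The plan is to establish the cycle $(5) \Rightarrow (6) \Rightarrow (2) \Rightarrow (5)$ and then close it via the equivalences already proved in Proposition~\ref{prop:compatible-observation-tests}. The first implication, $(5) \Rightarrow (6)$, is immediate from Proposition~\ref{prop:stron-weak}: strong compatibility of $\tilde{\test A}$ and $\tilde{\test B}$ entails weak compatibility of the same pair.

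For $(6) \Rightarrow (2)$, I unfold $\tilde{\test A} \to \tilde{\test B}$ using Definition~\ref{def:weak}, obtaining a test $\test C = \{\tC_k\}_{k \in \rZ} \subseteq \Trnset(\rA, \rB\rB')$, post-processing tests $\test P^{(k)} \subseteq \Trnset(\rB\rB', \rC\rC')$, and partitions $\{\rV_i\}, \{\rW_j\}$ satisfying Eqs.~\eqref{eq:weak-a}--\eqref{eq:weak-b} for $\tilde{\test A}$ and $\tilde{\test B}$. Composing both sides of these identities on the output with the deterministic effect $\B{e}$, and invoking the hypotheses $\B{a_i} = \B{e}\tA_i$, $\B{b_j} = \B{e}\tB_j$ together with the factorization $\B{e_\rB} \otimes \B{e_{\rB'}} = \B{e_{\rB\rB'}}$ of discarding, the resulting circuit identities are precisely the observation-test instance of Eqs.~\eqref{eq:weak-a}--\eqref{eq:weak-b} for $\test A$ and $\test B$, with the same partitions $\rV_i, \rW_j$ and with the witnessing test and post-processings replaced by their discarded versions $\{\B{e}\tC_k\}$ and $\{\B{e}\tP_l^{(k)}\}$. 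This gives $\test A \to \test B$; the symmetric argument starting from $\tilde{\test B} \to \tilde{\test A}$ yields $\test B \to \test A$, hence condition $(2)$.

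For $(2) \Rightarrow (5)$, I appeal to item~\ref{obs:4} of Proposition~\ref{prop:compatible-observation-tests}, which supplies an observation test $\{c_{i,j}\}_{(i,j) \in \rX \times \rY} \subseteq \Cntset(\rA)$ with $a_i = \sum_j c_{i,j}$ and $b_j = \sum_i c_{i,j}$. The plan is to pick the most trivial dilations: take $\rB = \rC = \rI$ and set $\tilde{\test A} \coloneqq \test A$, $\tilde{\test B} \coloneqq \test B$ viewed as tests with trivial output, so that the marginal identities $\B{a_i} = \B{e}\tA_i$ and $\B{b_j} = \B{e}\tB_j$ become tautological (the deterministic effect on $\rI$ being the unit scalar). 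The observation test $\{c_{i,j}\}$, identified with an element of $\Trnset(\rA, \rI\rI) = \Cntset(\rA)$, then plays the role of the witness $\test D$ in the double-outcome characterization of strong compatibility, so $\tilde{\test A} \Join \tilde{\test B}$.

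I expect no deep obstacle; the only step requiring genuine care is the bookkeeping in $(6) \Rightarrow (2)$, namely checking that right-composition with $\B{e}$ preserves the partition structure of the outcomes and absorbs the ancillary output wires $\rB'$ and $\rC'$ uniformly across every element of the witnessing test and post-processings, so that the observation-test version of Definition~\ref{def:weak} is truly instantiated. This reduces to the standard fact that composing any channel with the deterministic effect reproduces the deterministic effect on its input.
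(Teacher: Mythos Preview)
Your cycle $5\Rightarrow6\Rightarrow2\Rightarrow5$ mirrors the paper's $5\Rightarrow6\Rightarrow1\Leftrightarrow4\Rightarrow5$, and the overall argument is sound. Two remarks are in order.

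In $(6)\Rightarrow(2)$, your identification of the witnessing data does not type-check: the objects $\{\B{e}\tC_k\}$ are effects with trivial output, so any post-processing acting after them would have to be a preparation test from $\rI$, whereas $\{\B{e}\tP_l^{(k)}\}$ are effects on $\rB\rB'$. In particular one does \emph{not} have $\B{e}\tP_l^{(k)}\tC_k=(\B{e}\tP_l^{(k)})\cdot(\B{e}\tC_k)$ in general, so Eq.~\eqref{eq:weak-b} cannot be instantiated with these ``discarded versions''. The clean fix is to keep the \emph{original} $\{\tC_k\}$ and $\{\tP_l^{(k)}\}$ as witness and post-processings for $\test A\to\test B$, simply reading the full outputs $\rB\rB'$ and $\rC\rC'$ as the ancillas (the target systems of the observation tests being trivial); the identities you obtain after applying $\B{e}$ on the outputs are then exactly Eqs.~\eqref{eq:weak-a}--\eqref{eq:weak-b} under this relabelling. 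This is the bookkeeping you flagged as delicate, and it is indeed where your write-up slips.

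For $(2)\Rightarrow(5)$ your trivial choice $\rB=\rC=\rI$, making $\tilde{\test A}=\test A$ and $\tilde{\test B}=\test B$, is legitimate and more economical than the paper's route: the paper instead constructs nontrivial instruments $\tA_i\coloneqq\K{\omega}\B{a_i}$, $\tB_j\coloneqq\K{\nu}\B{b_j}$, $\tC_{i,j}\coloneqq(\K{\omega}\otimes\K{\nu})\B{c_{i,j}}$ with arbitrary deterministic preparations $\omega\in\Stset(\rB)$, $\nu\in\Stset(\rC)$, thereby exhibiting witnesses over generic output systems. Your shortcut exploits the fact that item~\ref{it:strongtest} with $\rB=\rC=\rI$ collapses to item~\ref{obs:1}, which is already covered by Proposition~\ref{prop:compatible-observation-tests}.
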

\begin{proof} The scheme of the proof is the following: $\ref{it:strongtest}\Rightarrow\ref{it:weaktest}\Rightarrow\ref{obs:1}\Leftrightarrow\ref{obs:4}\Rightarrow\ref{it:strongtest}$.
By Proposition~\ref{prop:stron-weak}, clearly $\ref{it:strongtest}\Rightarrow\ref{it:weaktest}$. Now, 
one can straightforwardly see that~$\ref{it:weaktest}\Rightarrow\ref{obs:1}$, taking the definition of weak compatibility for $\tilde{\test A}$ and $\tilde{\test B}$, and discarding both outputs. Finally, one can prove that~$\ref{obs:4}\Rightarrow\ref{it:strongtest}$ as follows. Define the tests $\tilde{\test{A}}\coloneqq\{\tA_i\}_{i\in\rX}\subseteq\Trnset{(\rA,\rB)}$, 
$\tilde{\test{B}}\coloneqq\{\tB_j\}_{j\in\rY}\subseteq\Trnset{(\rA,\rC)}$, and $\tilde{\test{C}}\coloneqq\{\tC_{i,j}\}_{(i,j)\in\rX\times\rY}\subseteq\Trnset{(\rA,\rB\rC)}$ as follows 
\begin{align*}
\Qcircuit @C=0.7em @R=1em 
    {&
       \qw\poloFantasmaCn{\rA}& \gate{\tA_i}& \qw\poloFantasmaCn{\rB}&\qw
                                                                       }&\coloneqq
\Qcircuit @C=0.7em @R=1em 
  {&
  \qw\poloFantasmaCn{\rA} &\measureD{a_i}
                            }                                                                      
  \Qcircuit @C=.5em @R=.5em { 
  &\prepareC{\omega} &
      \poloFantasmaCn{\rB} \qw&\qw 
    }\;,
                                            \\                            \\
\Qcircuit @C=0.7em @R=1em 
    {&
       \qw\poloFantasmaCn{\rA}& \gate{\tB_j}& \qw\poloFantasmaCn{\rC}&\qw
                                                                       }&\coloneqq
\Qcircuit @C=0.7em @R=1em 
  {&
  \qw\poloFantasmaCn{\rA} &\measureD{b_j}
                            }                                                                      
  \Qcircuit @C=.5em @R=.5em { 
  &\prepareC{\nu} &
      \poloFantasmaCn{\rC} \qw&\qw 
    }\;,
                                                                       \\ \\
  \begin{aligned}
    \Qcircuit @C=1em @R=.7em @! R { 
    & \qw\poloFantasmaCn{\rA}
      &\multigate{1}{{\tC}_{i,j}} &\qw\poloFantasmaCn{\rB}&\qw 
      \\
      & & \pureghost{{\tC}_{i,j}} &\qw\poloFantasmaCn{\rC}
      &\qw}
  \end{aligned}&\coloneqq
\Qcircuit @C=0.7em @R=1em 
  {&
  \qw\poloFantasmaCn{\rA} &\measureD{c_{i,j}}
                            }\;                                                                      
\begin{aligned} 
\Qcircuit @C=1em @R=.7em @! R {
\prepareC{\omega}&\qw \poloFantasmaCn{\rB}&\qw\\  
\prepareC{\nu}&\qw \poloFantasmaCn{\rC}&\qw  }
\end{aligned}
\;,
\end{align*}
where $\{\omega\}\subseteq\Stset(\rB)$, and $\{\nu\}\subseteq\Stset(\rC)$ are two arbitrary deterministic preparation tests. It immediately follows that $\B{a_i}=\B{e}\tA_i$ and $\B{b_j}=\B{e}\tB_j$. Using $\tilde{\test{C}}$ for the joint simulation of $\tilde{\test{A}}$ and $\tilde{\test{B}}$, one then has  $\tilde{\test{A}}\Join\tilde{\test{B}}$.
\end{proof}

\subsubsection{The case of channels}

Also the compatibility between channels, i.e.~deterministic tests made of a single event, is a special case of
the one given for general tests. However, in this case strong and weak compatibility do not lead to the same notion as it was for observables. 
For the convenience of the reader we  rewrite Definitions~\ref{def:strong} and~\ref{def:weak-fin} in the case of channels.

According to Definition~\ref{def:strong} two channels $\tA\in\Trnset{(\rA,\rB)}$ and $\tB\in\Trnset{(\rA,\rC)}$ are strongly compatible if there exists a channel $\tC\in\Trnset{(\rA,\rB\rC)}$ such that
\begin{subequations}
\begin{align}\label{eq:strong-comp-channels-a}
& \Qcircuit @C=0.7em @R=1em 
    {&
       \qw\poloFantasmaCn{\rA}& \gate{\tA}& \qw\poloFantasmaCn{\rB}&\qw
                                                                       }
                                                                       \; =
   \begin{aligned}
    \Qcircuit @C=1em @R=.7em @! R { 
    & \qw\poloFantasmaCn{\rA}
      &\multigate{1}{{\tC}} &\qw\poloFantasmaCn{\rB}&\qw
      \\
      & & \pureghost{{\tC}} &\qw\poloFantasmaCn{\rC}
      &\measureD{e} }
  \end{aligned},\\\label{eq:strong-comp-channels-b}
& \Qcircuit @C=0.7em @R=1em 
    {&
       \qw\poloFantasmaCn{\rA}& \gate{\tB}& \qw\poloFantasmaCn{\rC}&\qw
                                                                       }
                                                                       \; =
  \begin{aligned}
    \Qcircuit @C=1em @R=.7em @! R { 
    & \qw\poloFantasmaCn{\rA}
      &\multigate{1}{{\tC}} &\qw\poloFantasmaCn{\rB}&\measureD{e} 
      \\
      & & \pureghost{{\tC}} &\qw\poloFantasmaCn{\rC}
      &\qw}
  \end{aligned}.
\end{align}
\end{subequations}
This is the criterion studied for quantum channels in Refs.~\cite{Teiko-incompatibility-2016,Teiko-incompatibility-2017}. On the other hand, based on Definition~\ref{def:weak}, $\tA$ and $\tB$ are weakly compatible if $\tA\rightarrow\tB$ and $\tB\rightarrow\tA$, namely
\begin{subequations}
\begin{align}\label{eq:weak-comp-channels-a}
& \Qcircuit @C=0.7em @R=1em 
    {&
       \qw\poloFantasmaCn{\rA}& \gate{\tA}& \qw\poloFantasmaCn{\rB}&\qw
                                                                       }
                                                                       \; =
  \sum_{k} 
  \begin{aligned}
    \Qcircuit @C=1em @R=.7em @! R { 
    & \qw\poloFantasmaCn{\rA}
      &\multigate{1}{{\tC}_{k}} &\qw\poloFantasmaCn{\rB}&\qw
      \\
      & & \pureghost{{\tC}_{k}} &\qw\poloFantasmaCn{\rB'}
      &\measureD{e} }
  \end{aligned},
  \\\label{eq:weak-comp-channels-b}
&\Qcircuit @C=0.7em @R=1em 
    {&
       \qw\poloFantasmaCn{\rA}& \gate{\tB}& \qw\poloFantasmaCn{\rC}&\qw
                                                                       }
                                                                       \; =
  \sum_{k} 
  \begin{aligned}
    \Qcircuit @C=1em @R=.7em @! R { 
      & \qw\poloFantasmaCn{\rA}
      &\multigate{1}{{\tC}_{k}} &\qw\poloFantasmaCn{\rB}
      &\multigate{1}{{\tP^{(k)}}}  &\qw\poloFantasmaCn{\rC}
      &\qw
      \\
      & 
      & \pureghost{{\tC}_{k}} &\qw\poloFantasmaCn{\rB'}
      &  \ghost{{\tP^{(k)}}} &\qw\poloFantasmaCn{\rC'}
      &\measureD{e}}
  \end{aligned}\,
\end{align}
\end{subequations}
for some test $\test{C}\coloneqq\{\tC_k\}_{k\in\rZ}\subseteq\Trnset{(\rA,\rB\rB')}$ and deterministic transformations $\tP^{(k)}\in\Trnset{(\rB\rB',\rC\rC')}$ (and analogous condition for $\tB\rightarrow\tA$).
The non-equivalence between the two conditions above is a consequence of the following lemmas, the first already proved in Ref.~\cite{Teiko-incompatibility-2017}.

\begin{lemma}[Broadcasting via strongly-compatible reversible channels]\label{lem:broadcasting}
If there exist two left-reversible channels $\tA\in\Trnset{(\rA,\rB)}$ and $\tB\in\Trnset{(\rA,\rC)}$ that are strongly compatible then there exists a broadcasting map for system $\rA$.
\end{lemma}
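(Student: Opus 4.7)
The plan is to exhibit the broadcasting map as the composition
$\tD \coloneqq (\tA_{\mathrm L} \otimes \tB_{\mathrm L})\tC$,
where $\tC \in \Trnset(\rA,\rB\rC)$ is a joint channel witnessing the strong compatibility (so that $(\tI_\rB\otimes\B{e})\tC = \tA$ and $(\B{e}\otimes\tI_\rC)\tC = \tB$), and $\tA_{\mathrm L} \in \Trnset(\rB,\rA)$, $\tB_{\mathrm L} \in \Trnset(\rC,\rA)$ are chosen left-inverses of the two left-reversible channels. One then needs to check that $\tD$ is a channel from $\rA$ to $\rA\rA$ both of whose one-system marginals equal the identity $\tI_\rA$, which is the broadcasting condition on $\rA$.

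The marginal computations are the core of the verification. For the first,
$(\tI_\rA\otimes\B{e})\tD = \tA_{\mathrm L}\bigl[(\tI_\rB\otimes\B{e}\tB_{\mathrm L})\tC\bigr]$,
and one observes that the effect $\B{e}\tB_{\mathrm L}\in\Cntset(\rC)$ satisfies $(\B{e}\tB_{\mathrm L})\tB = \B{e}_\rA = \B{e}_\rC\tB$; that is, it agrees with $\B{e}_\rC$ on the image of $\tB$, which is precisely the support of the $\rC$-marginal of $\tC\rho$. Arguing that partial application of such an effect to the $\rC$-side of $\tC\rho$ reduces to the partial application of $\B{e}_\rC$, one gets $(\tI_\rB\otimes\B{e}\tB_{\mathrm L})\tC = \tA$, and therefore $(\tI_\rA\otimes\B{e})\tD = \tA_{\mathrm L}\tA = \tI_\rA$. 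By symmetry, $(\B{e}\otimes\tI_\rA)\tD = \tB_{\mathrm L}\tB = \tI_\rA$.

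The main obstacle is the determinism of $\tD$. A priori, left-inverses need not preserve the deterministic effect, i.e.\ $\B{e}_\rA\tA_{\mathrm L}$ and $\B{e}_\rA\tB_{\mathrm L}$ need not coincide with $\B{e}_\rB$ and $\B{e}_\rC$, so the overall normalization $\B{e}_{\rA\rA}\tD = (\B{e}_\rA\tA_{\mathrm L}\otimes\B{e}_\rA\tB_{\mathrm L})\tC$ is not manifestly equal to $\B{e}_\rA$. In quantum theory every left-inverse of an isometry can be completed to a channel by appending a fixed preparation on the orthogonal complement of the image, which settles the issue at once. In a general strongly causal OPT one either chooses the left-inverses $\tA_{\mathrm L},\tB_{\mathrm L}$ to be channels, or exploits the fact that strong compatibility concentrates $\tC\rho$ on the image of $\tB$ on the $\rC$-side (and of $\tA$ on the $\rB$-side), so that the potentially sub-normalized left-inverses act deterministically there. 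Carrying out this determinism argument is the crux of the proof.
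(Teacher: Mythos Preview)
Your construction is exactly the paper's: form $(\tA_{\mathrm L}\otimes\tB_{\mathrm L})\tC$ with $\tC$ the joint channel and $\tA_{\mathrm L},\tB_{\mathrm L}$ left-inverses. The paper simply asserts that both marginals equal $\tI_\rA$; it does not pause over the issue you flag, in effect treating $\tA_{\mathrm L},\tB_{\mathrm L}$ as channels.

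Your attempted workaround, however, has a genuine gap. From $(\B{e}\tB_{\mathrm L})\tB=\B{e}_\rC\tB$ you conclude only that the two effects agree on \emph{local} states of the form $\tB\sigma$. In a general OPT this does \emph{not} entail $(\tI_\rB\otimes\B{e}\tB_{\mathrm L})\tC=(\tI_\rB\otimes\B{e}_\rC)\tC$: equality of effects on a set of local states need not propagate to equality of their partial action on bipartite states whose marginal lies in that set (this would require a local-tomography or support-hereditary property that is not assumed). So neither the marginal computation nor the determinism of $\tD$ is established by your ``support'' argument.

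The clean resolution---which is what the paper is tacitly using---is that in a (weakly) causal theory one may always choose the left-inverses to be \emph{channels}. Indeed, if $\tV$ is any transformation with $\tV\tA=\tI_\rA$, complete it to a test $\{\tV,\tV'\}$; then $\B{e}_\rA(\tV+\tV')\tA=\B{e}_\rA$ forces $\B{e}_\rA\tV'\tA=0$, hence $\tV'\tA=0$ (a state with vanishing deterministic effect is null, also on every extension). Thus the coarse-graining $\tV+\tV'$ is a channel and still a left-inverse. With channel left-inverses one has $\B{e}_\rA\tA_{\mathrm L}=\B{e}_\rB$ and $\B{e}_\rA\tB_{\mathrm L}=\B{e}_\rC$, and both your concerns---the marginal identities and the determinism of $\tD$---follow in one line.
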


\begin{proof}
A broadcasting map for system $\rA$ is a transformation $\tC\in\Trnset(\rA,\rA\rA)$ such that, for every state $\rho\in\Stset(\rA\rD)$, the state $\tC\K{\rho}\in\Stset(\rA\rA\rD)$ has both marginals on system $\rA$ equal to $\rho$. Suppose now that $\tA\in\Trnset{(\rA,\rB)}$ and $\tB\in\Trnset{(\rA,\rC)}$ are left-reversible and strongly-compatible, namely there exists a channel $\tC\in\Trnset(\rA,\rB\rC)$ with marginals equal to $\tA$ and $\tB$ (see Eqs.~\eqref{eq:strong-comp-channels-a} and \eqref{eq:strong-comp-channels-b}). The  channel
\begin{align}\nonumber
  \begin{aligned}
    \Qcircuit @C=1em @R=.7em @! R { 
    & \qw\poloFantasmaCn{\rA}
      &\multigate{1}{{\tC}} &\qw\poloFantasmaCn{\rB}&\gate{\tV}&\qw\poloFantasmaCn{\rA}&\qw
      \\
      & & \pureghost{{\tC}} &\qw\poloFantasmaCn{\rC}&\gate{\tW}&\qw\poloFantasmaCn{\rA}&\qw}
  \end{aligned},
\end{align}
with $\tV$ and $\tW$ such that $\tV\tA=\tI_\rA$ and $\tW\tB=\tI_\rA$ respectively, has both marginals equal to the identity channel, thus providing a broadcasting channel.
\end{proof}

\begin{lemma}[Weak-compatibility of channels with reversible dilation]\label{lem:compatible-channels-dilation}
Two channels with reversible dilation are weakly-compatible.
\end{lemma}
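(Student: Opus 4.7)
My plan is to exploit the reversibility of the dilation to ``undo'' the test that realizes $\tA$ and then freely apply $\tB$ as a post-processing, and vice versa. The key observation is that the construction right above the statement of the lemma shows that any reversible dilation of a channel induces a left-reversible dilation whose right-inverse $\tilde\tV$ is explicitly given by $(\tI_\rA \otimes \B{e}_{\tilde\rA})\tU^{-1}$. Since I only need to verify $\tA\to\tB$ and $\tB\to\tA$, singleton tests $\test{C}$ and singleton post-processings $\test{P}^{(k)}$ will already suffice.

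Let the reversible dilation of $\tA\in\Trnset(\rA,\rB)$ be $(\tilde\rA, \tilde\rB, \eta, \tU, \B{e})$, and let $\tV\in\Trnset(\rA,\rB\tilde\rB)$ be the associated left-reversible dilation, with left-inverse $\tilde\tV\in\Trnset(\rB\tilde\rB,\rA)$ satisfying $\tilde\tV\tV=\tI_\rA$. To check $\tA\to\tB$, I would choose the singleton test $\test{C}\coloneqq\{\tV\}$, take $\rB'=\tilde\rB$, and let $\rC'$ be the trivial system. Then Eq.~\eqref{eq:weak-comp-channels-a} holds by the very definition of the left-reversible dilation: discarding $\tilde\rB$ at the output of $\tV$ returns $\tA$. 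For Eq.~\eqref{eq:weak-comp-channels-b}, I would take the singleton post-processing $\test{P}\coloneqq\{\tB\,\tilde\tV\}\subseteq\Trnset(\rB\tilde\rB,\rC)$, which is deterministic since it is a composition of the channels $\tilde\tV$ and $\tB$. The required identity then collapses to $\tB\,\tilde\tV\,\tV=\tB\,\tI_\rA=\tB$.

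The symmetric argument, starting from the reversible dilation of $\tB$, produces a left-reversible $\tV'$ and left-inverse $\tilde\tV'$ that give $\tB\to\tA$ via the singleton post-processing $\{\tA\,\tilde\tV'\}$. Combining the two directions yields $\tA\leftrightarrow\tB$ in the sense of Definition~\ref{def:weak-fin}.

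There is no real obstacle here; the only point deserving care is to make sure the ingredients from Definition~\ref{def:weak} are assembled consistently as singleton tests with a trivial ancillary output $\rC'$, so that the sums over $k$ and $l$ reduce to a single term and the partitions in Eqs.~\eqref{eq:weak-comp-channels-a}--\eqref{eq:weak-comp-channels-b} become trivial. The contrast with Lemma~\ref{lem:broadcasting} is worth noting: strong compatibility of two left-reversible channels would force broadcasting, while weak compatibility is compatible with reversibility precisely because the ``same'' test $\test{C}$ is not required to jointly realize $\tA$ and $\tB$—a different realization is allowed in each direction.
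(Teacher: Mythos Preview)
Your proof is correct and follows essentially the same route as the paper: use the left-reversible dilation $\tV$ of $\tA$ as the singleton test $\test{C}$, then post-process by first applying a left-inverse $\tilde\tV$ and then the other channel. The only cosmetic difference is that the paper post-processes with $\tW\tilde\tV$ (the dilation of $\tB$ composed with $\tilde\tV$, then discards $\tilde\rC$), whereas you apply $\tB\tilde\tV$ directly with trivial $\rC'$; both choices fit Definition~\ref{def:weak} equally well.
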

\begin{proof}
Consider two channels $\tA\in\Trnset{(\rA,\rB)}$ and $\tB\in\Trnset{(\rA,\rC)}$, and suppose that they have a left-
reversible dilation, namely there exist reversible transformations $\tV\in\Trnset{(\rA,\rB\tilde\rB)}$ and $\tW\in\Trnset{(\rA,\rC\tilde\rC)}$, such that
  \begin{align}
  &   \nonumber\Qcircuit @C=0.7em @R=1em 
    {&
       \qw\poloFantasmaCn{\rA}& \gate{\tA}& \qw\poloFantasmaCn{\rB}&\qw
                                                                       }
                                                                       \; =\; 
\begin{aligned}
    \Qcircuit @C=1em @R=.7em @! R { & \qw\poloFantasmaCn{\rA}
      &\multigate{1}{\tV} &\qw\poloFantasmaCn{\rB}
      &\qw\\
      &&\pureghost{\tV} &\qw\poloFantasmaCn{\tilde\rB}
      &\measureD{e} }
  \end{aligned}\,,
\\
 &   \nonumber\Qcircuit @C=0.7em @R=1em 
    {&
       \qw\poloFantasmaCn{\rA}& \gate{\tB}& \qw\poloFantasmaCn{\rC}&\qw
                                                                       }
                                                                       \; =\; 
\begin{aligned}
    \Qcircuit @C=1em @R=.7em @! R { & \qw\poloFantasmaCn{\rA}
      &\multigate{1}{\tW} &\qw\poloFantasmaCn{\rC}
      &\qw\\
      &&\pureghost{\tW} &\qw\poloFantasmaCn{\tilde\rC}
      &\measureD{e} }
  \end{aligned}\,.
  \end{align}
Now it is immediate to see that $\tA\rightarrow\tB$: indeed one can take $\tV$ in place of test $\test{C}$ in Eq.~\eqref{eq:weak-comp-channels-a} and $\tW\tilde\tV$ as post-processing in Eq.~\ref{eq:weak-comp-channels-b}, where $\tilde\tV$ is a left-inverse of $\tV$. Analogously, one can check that $\tB\rightarrow\tA$ using first $\tW$ to realize $\tB$, by discarding system $\tilde\rC$ and $\tV\tilde\tW$, with $\tilde\tW$ a left-inverse of $\tW$, as post-processing to get $\tA$ after discarding system $\tilde\rB$.
\end{proof}

\begin{corollary} Strong compatibility is strictly stronger than weak compatibility. 
\end{corollary}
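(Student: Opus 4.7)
The plan is to combine the two preceding lemmas with a witness from ordinary quantum theory. The implication \emph{strong $\Rightarrow$ weak} is exactly the content of Proposition~\ref{prop:stron-weak}, so what remains is to exhibit, in some OPT, a pair of tests that is weakly compatible but not strongly compatible. I would aim for the cleanest possible witness, namely a pair of \emph{channels}, since the channel case is precisely where the two definitions are expected to come apart.

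The witness I would use is quantum theory with a single nontrivial system $\rA$ (e.g.\ a qubit), taking both channels to be the identity channel $\tA=\tB=\tI_\rA\in\Trnset(\rA,\rA)$. Each is trivially reversible and so \emph{a fortiori} left-reversible, and each admits a (trivial) reversible dilation with ancillary system $\tilde\rA=\tilde\rB=\rI$ and unitary $\tU=\tI_\rA$. Hence by Lemma~\ref{lem:compatible-channels-dilation} the pair $(\tI_\rA,\tI_\rA)$ is weakly compatible.

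For the negative half, I would argue by contradiction. Suppose $\tI_\rA$ were strongly compatible with itself. Since $\tI_\rA$ is left-reversible (with left-inverse $\tI_\rA$ itself), Lemma~\ref{lem:broadcasting} would produce a broadcasting map for the quantum system $\rA$, i.e.\ a channel $\tC\in\Trnset(\rA,\rA\rA)$ whose two marginals act as the identity on every input state. This directly contradicts the quantum no-broadcasting theorem, which forbids such a map for any non-classical system. Therefore $\tI_\rA$ is not strongly compatible with itself, and we have produced the desired pair.

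Combining the three steps yields the corollary: weak compatibility is implied by strong compatibility (Proposition~\ref{prop:stron-weak}), but the converse fails in quantum theory on account of the qubit example above. I do not anticipate a real obstacle here; the only delicate point is verifying that the two identity channels legitimately satisfy the hypotheses of Lemma~\ref{lem:compatible-channels-dilation} in the sense required (trivial ancillae are allowed because the trivial system $\rI$ is a system of the theory and $\tI_\rA$ is reversible), after which the contradiction with no-broadcasting closes the argument.
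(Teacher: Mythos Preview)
Your proposal is correct and follows essentially the same approach as the paper: invoke quantum theory, use Lemma~\ref{lem:compatible-channels-dilation} (via reversible dilations) to get weak compatibility, and use Lemma~\ref{lem:broadcasting} together with quantum no-broadcasting to rule out strong compatibility. The only cosmetic difference is that the paper phrases the negative half for \emph{any} pair of reversible quantum channels, whereas you instantiate it with the single concrete witness $(\tI_\rA,\tI_\rA)$; this is the same argument at a different level of generality.
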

\begin{proof} This follows from the case of quantum channels. Since quantum theory satisfies no-broadcasting \cite{PhysRevLett.76.2818}, no pair of reversible channels can be strongly compatible, according to Lemma~\ref{lem:broadcasting}. On the other hand, quantum theory allows for reversible dilation of channels and thus, due to Lemma~\ref{lem:compatible-channels-dilation}, every pair of channels is weakly-compatible in quantum theory.
\end{proof}

\section{Theories with no incompatible tests}\label{sec:IV}
In this section we characterize theories that do not exhibit any pair of incompatible tests in the weak (and then also in the strong) sense. We call this feature full compatibility and show necessary and sufficient conditions for it. Among them is the equivalence between the class of operational probabilistic theories with full compatibility and the class of strongly causal theories with full information without disturbance, namely theories where all information can be extracted without introducing disturbance. This last property has been presented in Ref.~\cite{DAriano2020information}, where the authors prove criteria for studying the information-disturbance trade-off in an arbitrary operational probabilistic theory. A brief account of the results in Ref.~\cite{DAriano2020information} is given in the next section.

\subsection{Information-disturbance relation}

Probabilistic theories can be divided in three classes according their information-disturbance trade-off: 
\begin{enumerate}[label=(\roman*)]
\item\emph{No-information without disturbance}: this class encompasses theories where any test that provides 
information must disturb. This is the case e.g.~of quantum 
theory~\cite{RevModPhys.86.1261,Busch2009,dariano_chiribella_perinotti_2017}, or non-local boxes, as proved in 
Ref.~\cite{DAriano2020information}.\label{case1}
\item\emph{Information without disturbance}: this is the class corresponding to the most typical scenario, where some 
information can be extracted without disturbance and some other information cannot.\label{case2}
\item\emph{Full-information without disturbance}: this is the class of theories where all information can be extracted 
without disturbance. The prototype of theories in this class is classical information theory.\label{case3}
\end{enumerate}

This classification is based on the following definition of disturbance that does not rely on any specific feature of the  theory.

\begin{definition}[Non-disturbing test]\label{def:non-disturbing test}
   A test $\test{A}\coloneqq\{\tA_i\}_{i\in\rX}\subseteq\Trnset{(\rA,\rB)}$ is {\em non-disturbing} 
   if  $\sum_i\tA_i=\tR$, where $\tR\in\Trnset{(\rA,\rB)}$ is a reversible transformation. In a (weakly) causal theory 
   we require $\tR$ to be just left-reversible.
\end{definition}  
  Indeed, if a test is coarse-grained to a reversible transformation, then
  its effect can be corrected by inverting it. Restricting to the scenario of (weakly) causal theories, it is 
  reasonable
  to relax the above definition including the possibility that the channel $\tR$ is just left-reversible, meaning 
  that the correction can be applied \emph{after} the test. The above condition in the (weakly) causal case 
  states that a test is non-disturbing if its effect can be corrected not only on local states $\rho\in\Stset{(\rA)}$, 
  but also when it is locally applied to every state $\Psi\in\Stset{(\rA\rC)}$ for every ancillary system $\rC$.  This 
  criterion captures the meaning of disturbance also in theories without local tomography, where a test that preserves 
  all local states $\rho\in\Stset{(\rA)}$, still could modify a state $\Psi\in\Stset{(\rA\rC)}$ for some system $\rC$. 
  An example of this phenomenon occurs in fermionic quantum theory \cite{Bravyi2002210,d2014feynman,D_Ariano_2014} as 
  shown in Ref.~\cite{DAriano2020information}.

A central role in the above classification is played by the identity transformation as established by the following proposition, that summarises various results proved in Ref.~\cite{DAriano2020information}
\begin{proposition}[Atomic decomposition of the identity]\label{prop:identity-atomic-decomposition} 
Given an operational probabilistic theory one of the following holds:
\begin{enumerate}
\item The identity transformation is atomic for every system. \label{it:atid}
This is a necessary and sufficient condition for  no-information without disturbance.

\item \label{it:nonatid}The identity is not atomic for some system. The non redundant atomic refinement $\{\tA_i\}_{i\in\rX}\subseteq\Trnset(\rA)$ of the identity
  $\tI_\rA=\sum_i\tA_i$ is unique with $\tA_i\tA_j=\tA_i\delta_{ij}$. Thus, the sets of states and effects of system $\rA$ decompose as
  \begin{align}\label{eq:sumst}
  \begin{aligned}
  \Stset(\rA)&=\bigoplus_{i\in\rX}\Stset_{i}(\rA),\\
  \Cntset(\rA)&=\bigoplus_{i\in\rX}\Cntset_{i}(\rA),
\end{aligned}
  \end{align}
where $\Stset_{i}(\rA)\coloneqq\tA_i\Stset(\rA)$ and $\Cntset_{i}(\rA)\coloneqq\Cntset(\rA)\tA_i$.
  Moreover, for any pair of systems $\rA$, $\rB$, with non redundant atomic decompositions $\{\tA_i\}_{i\in\rX}$, $\{\tB_j\}_{j\in\rY}$ of the identities $\tI_\rA$ and $\tI_\rB$, one
has the following decomposition of the set of states and of the set of effects of $\rA\rB$
\begin{equation}\label{bigoplus}
\begin{aligned}
  \Stset(\rA\rB)&=\bigoplus_{(i,j)\in\rX\times\rY}\bigoplus_{k\in K(i,j)}\Stset_{ijk}(\rA\rB),\\
  \Cntset(\rA\rB)&=\bigoplus_{(i,j)\in\rX\times\rY}\bigoplus_{k\in K(i,j)}\Cntset_{ijk}(\rA\rB),
\end{aligned}
\end{equation}  
where
\begin{equation}\label{bigolus}
\begin{aligned}
  \tA_i\otimes\tA_j\Stset(\rA\rB)&=\bigoplus_{k\in K(i,j)}\Stset_{ijk}(\rA\rB),\\
  \Cntset(\rA\rB)\tA_i\otimes\tA_j&=\bigoplus_{k\in K(i,j)}\Cntset_{ijk}(\rA\rB).
\end{aligned}
\end{equation}  
Clearly, 
$\tA_i\K{\Psi_{i'}}=\delta_{ii'}\K{\Psi_{i}}$, and $\B{c_{j'}} \tA_j=\delta_{jj'}\B{c_{j}}$
for all $\Psi_{i'}\in\Stset_{i'}(\rA)$ and
$c_{j'}\in\Cntset_{j'}(\rA)$.
\end{enumerate}
\end{proposition}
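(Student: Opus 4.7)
The proposition is really a dichotomy together with a structural statement, so I would organise the proof in two blocks, mirroring the two cases.

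The plan for item~\ref{it:atid} is to argue that atomicity of every identity is equivalent to the fact that every non-disturbing test $\test{A}=\{\tA_i\}_{i\in\rX}\subseteq\Trnset(\rA,\rB)$ is trivial, which is the operational content of no-information without disturbance. If $\sum_i\tA_i=\tR$ with $\tR$ left-reversible and $\tilde\tR$ a left-inverse, then $\{\tilde\tR\tA_i\}_{i\in\rX}$ is a refinement of $\tI_\rA$, hence by atomicity every $\tilde\tR\tA_i=\lambda_i\tI_\rA$, whence $\tA_i=\lambda_i\tR$ and the test carries no information beyond the outcome label sampled from $\{\lambda_i\}$. Conversely, if $\tI_\rA$ admits a non-trivial refinement $\{\tA_i\}$ then that refinement is itself a non-disturbing informative test. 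I would invoke the first branch of the proposition (referencing~\cite{DAriano2020information}) for this equivalence.

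For item~\ref{it:nonatid}, the key step is to show that any two non-redundant atomic refinements $\{\tA_i\}_{i\in\rX}$ and $\{\tB_j\}_{j\in\rY}$ of $\tI_\rA$ coincide. I would look at the sequential products: from $\tA_i=\tA_i\tI_\rA=\sum_j\tA_i\tB_j$ and atomicity of $\tA_i$, each $\tA_i\tB_j$ must be proportional to $\tA_i$; symmetrically, it must be proportional to $\tB_j$. Hence whenever $\tA_i\tB_j\neq 0$ we have $\tA_i\propto\tB_j$, and non-redundancy forces a bijection between the two families; uniqueness follows. Specialising this argument to $\{\tA_i\}=\{\tB_j\}$ gives $\tA_i\tA_j=\mu_{ij}\tA_i=\nu_{ij}\tA_j$, so $\tA_i\tA_j=0$ for $i\neq j$ by non-redundancy, and then $\tA_i=\tA_i\tI_\rA=\sum_j\tA_i\tA_j=\tA_i^2$ yields idempotence, i.e.\ $\tA_i\tA_j=\delta_{ij}\tA_i$.

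The direct-sum decompositions~\eqref{eq:sumst} then follow by elementary linear algebra. Every state factors as $\K{\rho}=\sum_i\tA_i\K{\rho}$, placing $\K{\rho_i}\coloneqq\tA_i\K{\rho}$ in $\Stset_i(\rA)\coloneqq\tA_i\Stset(\rA)$; independence of the summands follows from the fact that if $\sum_i\tA_i\K{\psi_i}=0$, applying $\tA_j$ and using orthogonality gives $\tA_j\K{\psi_j}=0$, hence each component vanishes. An identical argument on the effect side yields $\Cntset(\rA)=\bigoplus_i\Cntset_i(\rA)$, and the stated relations $\tA_i\K{\Psi_{i'}}=\delta_{ii'}\K{\Psi_i}$ and $\B{c_{j'}}\tA_j=\delta_{jj'}\B{c_{j}}$ are immediate from idempotence.

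The main obstacle, and the step I would think about most carefully, is the composite decomposition~\eqref{bigoplus}--\eqref{bigolus}. The outer sum over $(i,j)\in\rX\times\rY$ is produced by the parallel-composed idempotents $\tA_i\otimes\tA_j$, which by the same orthogonality argument as above split $\Stset(\rA\rB)$ and $\Cntset(\rA\rB)$ into the subspaces $(\tA_i\otimes\tA_j)\Stset(\rA\rB)$ and $\Cntset(\rA\rB)(\tA_i\otimes\tA_j)$. The inner index $k\in K(i,j)$ is needed precisely because the theory is not assumed locally tomographic: within each block $\tA_i\otimes\tA_j$ there may be further super-selection sectors not visible from the single-system decompositions. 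To pin this down I would pass to the non-redundant atomic refinement of $\tI_{\rA\rB}$ and use the uniqueness result from item~\ref{it:nonatid} applied to the composite system, showing that such a refinement must respect each $\tA_i\otimes\tA_j$ block (since these are idempotent coarse-grainings of $\tI_{\rA\rB}$) and hence decomposes the block further, indexing the sub-blocks by $k\in K(i,j)$. The detailed combinatorics of this refinement is the delicate part and would, in my plan, invoke the corresponding analysis in~\cite{DAriano2020information}.
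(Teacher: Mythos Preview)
The paper does not actually prove this proposition: it is stated as a summary of results from Ref.~\cite{DAriano2020information}, with the sentence preceding the proposition explicitly saying that it ``summarises various results proved in Ref.~\cite{DAriano2020information}''. There is therefore no in-paper proof to compare against.

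Your sketch is a reasonable outline of how one would establish the result and, as far as I can tell, is in the spirit of the arguments in the cited reference: reducing no-information without disturbance to atomicity of the identity via a left-inverse, deriving uniqueness and idempotence of the atomic refinement by playing two refinements against each other through sequential composition, and obtaining the block decompositions from the resulting family of orthogonal idempotents. You are also right to flag the composite-system step~\eqref{bigoplus}--\eqref{bigolus} as the place where extra care is needed, and your plan---to apply the single-system uniqueness result to the atomic non-redundant refinement of $\tI_{\rA\rB}$ and observe that it must refine each product idempotent $\tA_i\otimes\tB_j$---is the correct strategy. The one point worth tightening is the claim that the atomic refinement of $\tI_{\rA\rB}$ ``respects'' each block $\tA_i\otimes\tB_j$: this needs the observation that the product idempotents themselves form a refinement of $\tI_{\rA\rB}$, so by the uniqueness argument the atomic refinement of $\tI_{\rA\rB}$ is (up to relabelling) a refinement of $\{\tA_i\otimes\tB_j\}_{(i,j)}$, which is exactly what partitions the atomic pieces into the groups $K(i,j)$. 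Since you already plan to defer the full details to~\cite{DAriano2020information}, and the paper itself does the same, there is nothing materially missing.
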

In case~\ref{it:nonatid} there is always some information that can be extracted without disturbance---from Proposition~\ref{prop:identity-atomic-decomposition}, this is clearly the 
information corresponding to the outcomes of tests refining the identity transformation. As a special case one has the 
class of theories with full-information without disturbance (case~\ref{case3}), where any test can be realized in such 
a way that via a post-processing its effects can be ``erased''.
\begin{definition}[Full-information without disturbance]\label{def:fiwd}
A theory satisfies full-information without disturbance if for every
$\test{A}\coloneqq\{\tA_i\}_{i\in\rX}\subseteq\Trnset{(\rA,\rB)}$ one has $\test{A}\rightarrow\tI_\rA$.
\end{definition}

\subsection{Necessary and sufficient conditions for full-compatibility} 
We are now interested in characterizing the class of theories that do not include incompatible tests. We then introduce the class of fully-compatible theories as the class of theories satisfying the following feature:
\begin{definition}[Full-compatibility]
  A theory satisfies full-compatibility if for every pair of tests
  $\test{A}\subseteq\Trnset{(\rA,\rB)}$ and $\test{B}\subseteq\Trnset{(\rA,\rC)}$,
  one has $\test{A}\leftrightarrow\test{B}$.
\end{definition}

We can now prove equivalent conditions for full-compatibility. We show that the class of theories satisfying full 
compatibility coincides with the class of theories with full-information without disturbance. Moreover, we show that 
full compatibility is equivalent to the condition that any test can be simulated using a non-disturbing test,  possibly involving an ancillary system.
Before proving the main result we present a preliminary lemma.

\begin{lemma}\label{lem:discrim}
In any theory violating no-information without disturbance there exist systems with an arbitrarily large number of perfectly discriminable states.
 \end{lemma}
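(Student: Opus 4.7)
The plan is to use the atomic decomposition of the identity from Proposition~\ref{prop:identity-atomic-decomposition}, combined with the multiplicative behaviour of that decomposition under parallel composition, to manufacture families of perfectly discriminable states of arbitrarily large cardinality.

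First I would apply Proposition~\ref{prop:identity-atomic-decomposition}: since the theory violates no-information without disturbance, case~\ref{it:atid} is ruled out, so case~\ref{it:nonatid} holds. Thus there exists a system $\rA$ whose identity $\tI_\rA=\sum_{i\in\rX}\tA_i$ admits a non-redundant atomic refinement with $\tA_i\tA_j=\delta_{ij}\tA_i$, and necessarily $|\rX|\geq 2$ (otherwise $\tI_\rA$ would itself be atomic). For each $i$ I would pick a deterministic state $\rho_i\in\Stset_i(\rA)=\tA_i\Stset(\rA)$; using $\tA_i^2=\tA_i$ together with $\tA_i\tA_j=0$ for $i\neq j$, the identities in Proposition~\ref{prop:identity-atomic-decomposition} give $\tA_i\K{\rho_j}=\delta_{ij}\K{\rho_j}$, so the observation test $\{\B{e}\tA_i\}_{i\in\rX}$ perfectly discriminates the states $\{\rho_i\}_{i\in\rX}$, yielding $|\rX|\geq 2$ perfectly discriminable states on $\rA$.

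To pass from a fixed number $|\rX|$ to arbitrarily large ones, I would invoke the composite-system part of Proposition~\ref{prop:identity-atomic-decomposition}, iterated $n$ times. Applied to $\rA^{\otimes n}$ it yields
\begin{equation*}
\tI_{\rA^{\otimes n}}=\sum_{(i_1,\dots,i_n)\in\rX^n}\tA_{i_1}\otimes\cdots\otimes\tA_{i_n}+\cdots,
\end{equation*}
with the same orthogonality relations among the diagonal blocks, because the decomposition in Eqs.~\eqref{bigoplus}--\eqref{bigolus} contains, as a sub-family, the $|\rX|^n$ blocks labelled by $\rX^n$. Choosing a deterministic state $\rho_{i_1\cdots i_n}$ in each block $\Stset_{i_1\cdots i_n}(\rA^{\otimes n})\coloneqq(\tA_{i_1}\otimes\cdots\otimes\tA_{i_n})\Stset(\rA^{\otimes n})$, the observation test with effects $\B{e}(\tA_{i_1}\otimes\cdots\otimes\tA_{i_n})$ perfectly discriminates these $|\rX|^n$ states by the same calculation as above, applied factor by factor. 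Taking $n\to\infty$ produces systems $\rA^{\otimes n}$ with an unbounded number of perfectly discriminable states.

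The main subtlety I expect is the careful bookkeeping in showing that $\{\tA_{i_1}\otimes\cdots\otimes\tA_{i_n}\}_{(i_1,\dots,i_n)\in\rX^n}$ are pairwise orthogonal (in the sense $\tA_{\vec\imath}\tA_{\vec\jmath}=\delta_{\vec\imath\vec\jmath}\tA_{\vec\imath}$) and deliver genuinely non-zero sectors; this follows from the parallel-composition rule $(\tA\otimes\tB)(\tA'\otimes\tB')=(\tA\tA')\otimes(\tB\tB')$ together with the per-system orthogonality $\tA_i\tA_j=\delta_{ij}\tA_i$ supplied by Proposition~\ref{prop:identity-atomic-decomposition}, while non-emptiness of each sector is guaranteed by the existence of deterministic states in $\Stset_i(\rA)$ (using strong causality, every sector contains a normalised state). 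No further machinery beyond Proposition~\ref{prop:identity-atomic-decomposition} seems required.
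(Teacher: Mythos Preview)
Your proposal is correct and follows essentially the same route as the paper: both invoke Proposition~\ref{prop:identity-atomic-decomposition} to obtain a system $\rA$ with a non-trivial atomic refinement $\{\tA_i\}_{i\in\rX}$ of $\tI_\rA$, tensor up to $\rA^{\otimes n}$, pick one (normalised) state per block, and discriminate via the observation test $\{\B{e}(\tA_{i_1}\otimes\cdots\otimes\tA_{i_n})\}$. One small cleanup: the identity on $\rA^{\otimes n}$ is \emph{exactly} $\sum_{(i_1,\dots,i_n)\in\rX^n}\tA_{i_1}\otimes\cdots\otimes\tA_{i_n}$ with no ``$+\cdots$'' left over---the extra index $k$ in Eqs.~\eqref{bigoplus}--\eqref{bigolus} refines each summand further, it does not add new summands.
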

\begin{proof}
According to Proposition~\ref{prop:identity-atomic-decomposition} in any theory where no-information without 
disturbance does not hold there exists a system $\rA$ such that the identity $\tI_\rA$ is not atomic. 
Let $\{\tA_i\}_{i\in\rX}$, with $|\rX|=m\geq2$, be the unique non redundant atomic decomposition of $\tI_\rA$. 
Therefore, 
again by Proposition~\ref{prop:identity-atomic-decomposition}, $\{\tA_{i_1i_2\ldots i_n}\}_{(i_1i_2\ldots i_n)\in\rX^n}$, with $\tA_{i_1i_2\ldots i_n}\coloneqq\tA_{i_1}\otimes\tA_{i_2}\otimes\ldots\otimes\tA_{i_n}$,
is a decomposition of the identity of system $\rN\coloneqq\rA^{\otimes n}$, and the set of states $\Stset{(\rN)}$ splits in $mn$ blocks, i.e.~$\Stset(\rN)=\bigoplus_{(i_1i_2\ldots i_n)\in\rX^n}\Stset_{i_1i_2\ldots i_n}{(\rN)}$---where $\Stset_{i_1i_2\ldots i_n}{(\rN)}$ in turn splits as $\Stset_{i_1i_2\ldots i_n}{(\rN)}=\bigoplus_{k\in K(i_1i_2\ldots i_n)}\Stset_{i_1i_2\ldots i_Nk}(\rN)$. Since each block spans at least a one-dimensional vector space, there exists a set of non-null states 
$\{\tilde{\rho}_{i_1i_2\ldots i_n}\}_{(i_1i_2\ldots i_n)\in\rX^n}$ such that $\tA_{i'_1i'_2\ldots i'_n}\K{\tilde{\rho}_{i_1i_2\ldots i_n}}=\delta_{i_1i'_1}\delta_{i_2i'_2}\ldots\delta_{i_ni'_n}\K{\tilde{\rho}_{i_1i_2\ldots i_n}}$. Moreover, since the theory is strongly causal and each state is proportional to a deterministic one, one can always consider the set of deterministic states $\{\rho_{i_1i_2\ldots i_n}\}_{(i_1i_2\ldots i_n)\in\rX^n}$, where $\K{\rho_{i_1i_2\ldots i_n}}=\K{\tilde{\rho}_{i_1i_2\ldots i_n}}/(e \K{\tilde{\rho}_{i_1i_2\ldots i_n}}$, where $e$ 
denotes the deterministic effect of system $\rN$. The last set still satisfies the relation $\tA_{i'_1i'_2\ldots i'_n}\K{\rho_{i_1i_2\ldots i_n}}=\delta_{i_1i'_1}\delta_{i_2i'_2}\ldots\delta_{i_ni'_n}\K{\rho_{i_1i_2\ldots i_n}}$. Finally, we observe that the observation test 
$\{a_{i_1i_2\ldots i_n}\}_{(i_1i_2\ldots i_n)\in\rX^n}$, with $\B{a_{i_1i_2\ldots i_n}}\coloneqq\B{e}\tA_{i_1i_2\ldots i_n}$, is such that 
$(a_{i_1i_2\ldots i_n}|\rho_{i'_1i'_2\ldots i'_n})=\delta_{i_1i'_1}\delta_{i_2i'_2}\ldots\delta_{i_ni'_n}$.
\end{proof}

We are now ready to prove the main result in this section.
\begin{theorem}
The following conditions are equivalent
  \begin{enumerate}
\item\label{i:fc} The theory satisfies full compatibility
\item\label{i:fiwd} The theory satisfies full information without disturbance
\item\label{i:universality} For every test 
$\test{A}\coloneqq\{\tA_i\}_{i\in\rX}\subseteq\Trnset{(\rA,\rB)}$ one has
  \begin{equation*}
\begin{aligned}
  & \Qcircuit @C=0.7em @R=1em {& \qw\poloFantasmaCn{\rA}&
    \gate{\tA_i}& \qw\poloFantasmaCn{\rB}&\qw }
    \ =\;
    \sum_{k\in\rZ_i} 
\begin{aligned}
  \Qcircuit @C=1em @R=.7em @! R {&
    \qw\poloFantasmaCn{\rA}&
    \multigate{1}{\tV_k}& \qw\poloFantasmaCn{\rB}&\qw \\
    &&\pureghost{\tV_k}&\qw\poloFantasmaCn{\tilde\rB}&\measureD{e}}
\end{aligned}\,,
\end{aligned}
\end{equation*}  
with $\test{V}\coloneqq\{\tV_k\}_{k\in\rZ}\subseteq\Trnset{(\rA,\rB)}$ such that $\sum_k\tV_k$ is a left-reversible channel.

\end{enumerate}
\end{theorem}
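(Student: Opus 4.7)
My plan is to close the cycle $(\ref{i:universality}) \Rightarrow (\ref{i:fc}) \Rightarrow (\ref{i:fiwd}) \Rightarrow (\ref{i:universality})$, disposing of the two easy implications first. For $(\ref{i:universality}) \Rightarrow (\ref{i:fc})$, given $\test{A}$ and $\test{B}$, I apply~(\ref{i:universality}) to each to obtain non-disturbing refinements $\{\tV^A_k\}$ and $\{\tV^B_l\}$, whose coarse-grainings are left-reversible channels with left-inverses (which may be taken to be channels by completing each left-inverse with a dummy branch on the complement of its image) $\tilde{\tV}^A$ and $\tilde{\tV}^B$. Choosing the intermediate test of Definition~\ref{def:weak} to be $\tC_k \coloneqq \tV^A_k$ and setting the $k$-independent post-processing $\tP_l^{(k)} \coloneqq \tV^B_l \tilde{\tV}^A$, left-reversibility gives $\sum_k \tP_l^{(k)} \tV^A_k = \tV^B_l$; coarse-graining on the partition $\rW_j \coloneqq \rZ^A \times \rZ^B_j$ then returns $\tB_j$, realising $\test{A}\rightarrow\test{B}$, and a symmetric construction gives $\test{B}\rightarrow\test{A}$. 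The implication $(\ref{i:fc})\Rightarrow(\ref{i:fiwd})$ is immediate by instantiating full compatibility with $\test{B} = \{\tI_\rA\}$.

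The substantive direction is $(\ref{i:fiwd})\Rightarrow(\ref{i:universality})$, which I would approach by first extracting a classical structure from full information without disturbance and then using it to build non-disturbing refinements explicitly. Since (\ref{i:fiwd}) precludes no-information without disturbance, Proposition~\ref{prop:identity-atomic-decomposition} places the theory in case~(\ref{it:nonatid}): every system $\rA$ carries a unique non-redundant atomic decomposition $\tI_\rA = \sum_i \tA_i$ with $\tA_i \tA_j = \delta_{ij}\tA_i$ and a corresponding block decomposition of $\Stset(\rA)$. The key structural step is to show that every block $\Stset_i(\rA)$ collapses to a single deterministic state $\rho_i$: any pair of distinguishable states within a block would provide a strictly finer observation test which, by (\ref{i:fiwd}), would need to admit a non-disturbing realisation, yielding through the orthogonality $\tA_i\tA_j = \delta_{ij}\tA_i$ and the proliferation-of-discriminable-states argument in Lemma~\ref{lem:discrim} a proper refinement of the identity that contradicts the maximality of $\{\tA_i\}$. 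Once all blocks are singletons, a canonical broadcasting channel exists, e.g.\ $\tB_\rA \coloneqq \sum_i (\tI_\rA \otimes \K{\rho_i}) \tA_i \in \Trnset(\rA,\rA\rA)$, whose two marginals are both $\tI_\rA$.

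With broadcasting in hand, condition~(\ref{i:universality}) follows directly. For any test $\test{A} = \{\tA_i\}_{i\in\rX}\subseteq\Trnset(\rA,\rB)$ set
\begin{equation*}
\tV_i \;\coloneqq\; (\tA_i \otimes \tI_\rA)\,\tB_\rA \;\in\; \Trnset(\rA,\rB\rA).
\end{equation*}
Then $\sum_i \tV_i = (\bar\tA \otimes \tI_\rA)\tB_\rA$ with $\bar\tA = \sum_i \tA_i$ a channel; left-reversibility follows from $(e_\rB \otimes \tI_\rA)\sum_i \tV_i = (e_\rA \otimes \tI_\rA)\tB_\rA = \tI_\rA$ by the broadcasting property, and discarding the ancillary returns $(\tI_\rB \otimes e_\rA)\tV_i = \tA_i$, so $\rZ_i = \{i\}$ works as the partition. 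The principal obstacle of the whole argument is the classicality step: translating the seemingly weak operational hypothesis of post-processing back to the identity into the strong structural claim of one-dimensional blocks, which is what licenses the broadcasting/cloning construction manifestly forbidden in genuinely non-classical OPTs such as quantum theory.
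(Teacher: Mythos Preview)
Your implications $(\ref{i:universality})\Rightarrow(\ref{i:fc})$ and $(\ref{i:fc})\Rightarrow(\ref{i:fiwd})$ are correct and coincide with the paper's own arguments (the paper runs the cycle in the order $(\ref{i:fc})\Rightarrow(\ref{i:fiwd})\Rightarrow(\ref{i:universality})\Rightarrow(\ref{i:fc})$, but the individual implications match).

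For $(\ref{i:fiwd})\Rightarrow(\ref{i:universality})$ you take a genuinely different route from the paper, and the route has a gap. The paper does \emph{not} pass through classicality. Instead, given $\test A\rightarrow\tI_\rA$ it takes the witnessing data $\{\tC_k\}$, $\{\tP^{(k)}\}$ directly from Definition~\ref{def:weak}, invokes Lemma~\ref{lem:discrim} to obtain perfectly discriminable deterministic states $\{\psi_k\}$ on an ancilla $\rB''$, and tags each $\tC_k$ with $\psi_k$ to form $\tW\coloneqq\sum_k \tC_k\otimes\K{\psi_k}$. Left-reversibility of $\tW$ is then witnessed by reading off $k$ with the discriminating observation test and applying $\tP^{(k)}$; the test $\{\tV_k\}$ is obtained by applying the atomic identity-decomposition of $\rB''$ to the $\rB''$-output of $\tW$. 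This is a direct ``record the branch index on a classical register'' construction, and it uses Lemma~\ref{lem:discrim} only to guarantee a large enough register.

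Your approach instead tries to establish first that every block $\Stset_i(\rA)$ is one-dimensional, then build a broadcasting channel $\tB_\rA$, and finally set $\tV_i\coloneqq(\tA_i\otimes\tI_\rA)\tB_\rA$. The last two steps are fine once classicality is in hand. The gap is the classicality step itself: the sentence ``a pair of distinguishable states within a block would provide a strictly finer observation test which \ldots\ yields a proper refinement of the identity contradicting the maximality of $\{\tA_i\}$'' is not a proof. A non-disturbing realisation of a distinguishing \emph{observation} test produces a test $\{\tD_k\}\subseteq\Trnset(\rA,\rB\rB')$ with left-reversible coarse-graining; nothing in that data hands you transformations in $\Trnset(\rA,\rA)$ that refine the atomic projector $\tA_i$, and the orthogonality $\tA_i\tA_j=\delta_{ij}\tA_i$ by itself does not bridge the gap. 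Your appeal to Lemma~\ref{lem:discrim} is also misplaced here: that lemma \emph{proliferates} discriminable states by tensoring copies of a system with non-atomic identity; it does not show that discriminability inside a block forces a finer atomic decomposition. In fact the paper proves classicality only \emph{after} the main theorem, via a separate argument about atomic effects (Lemma~\ref{lem:atomic-maps} and the subsequent Proposition), and that argument already uses $\test A\rightarrow\tI_\rA$ in a more delicate way than your sketch. So as written your $(\ref{i:fiwd})\Rightarrow(\ref{i:universality})$ is incomplete; either supply a rigorous classicality proof from $(\ref{i:fiwd})$ alone, or switch to the paper's direct tagging construction.
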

\begin{proof}
We prove the chain of implications \ref{i:fc} $\Rightarrow$ \ref{i:fiwd} $\Rightarrow$ \ref{i:universality} $\Rightarrow$ \ref{i:fc}. 
That $\ref{i:fc}\Rightarrow \ref{i:fiwd}$ trivially follows from Definition~\ref{def:weak}, considering 
$\test B=\{\tI_\rA\}$. As to $\ref{i:fiwd}\Rightarrow\ref{i:universality}$, one can invoke Lemma~\ref{lem:discrim}
along with strong causality, and considering the schemes~\eqref{eq:weak-a} and~\eqref{eq:weak-b}, with 
$\{\tB_j\}_{j\in\rY}\equiv\{\tI_\rA\}$, we can construct the test
\begin{align}
  \begin{aligned}
    \Qcircuit @C=1em @R=.7em @! R { & \qw\poloFantasmaCn{\rA}
      &\multigate{2}{\tW} & \qw\poloFantasmaCn{\rB}
      &\qw\\
      & &\pureghost{\tW} &\qw\poloFantasmaCn{\rB^\prime}
      &\qw \\
      &&\pureghost{\tW}&\qw\poloFantasmaCn{\rB''}
      &\qw}      
  \end{aligned}\ \coloneqq\ 
  \sum_{k\in\rZ} \; 
  \begin{aligned}
    \Qcircuit @C=1em @R=.7em @! R { & \qw\poloFantasmaCn{\rA}
      &\multigate{1}{\tC_{k}} & \qw\poloFantasmaCn{\rB}
      &\qw\\
      & &\pureghost{\tC_k}&\qw\poloFantasmaCn{\rB'}
      &\qw \\
      &&\prepareC{\psi_{k}}&\qw\poloFantasmaCn{\rB''}
      &\qw}
  \end{aligned}\ ,
\end{align}
that is reversible by
\begin{align}
  \begin{aligned}
    \Qcircuit @C=1em @R=.7em @! R { & \qw\poloFantasmaCn{\rB}
      &\multigate{2}{\tZ} & \qw\poloFantasmaCn{\rA}
      &\qw\\
      &\qw\poloFantasmaCn{\rB^\prime} &\ghost{\tZ} &
      & \\
      &\qw\poloFantasmaCn{\rB''}&\ghost{\tZ}&
      &}      
  \end{aligned}\ \coloneqq\ 
  \sum_{k}  \; 
  \begin{aligned}
    \Qcircuit @C=1em @R=.7em @! R { & \qw\poloFantasmaCn{\rB}
      &\multigate{1}{\tP^{(k)}} & \qw\poloFantasmaCn{\rA}&\qw\\
      &\qw\poloFantasmaCn{\rB'}&\ghost{{\tP}^{(k)}} 
      &\qw\poloFantasmaCn{\rC'}&\measureD{e} \\
      &\qw\poloFantasmaCn{\rB''}&\measureD{a_{k}}&&&
      &}
  \end{aligned}\ ,
\end{align}
where $\{\psi_k\}_{k\in\rZ}$ is a set of perfectly discriminable deterministic states with 
$\K{\psi_k}\in\Stset_k(\rB'')$,
$\B{a_k}\coloneqq\B{e}\tilde\tA_k$, $e$ denoting the deterministic effect for system ${\rB''}$, and 
$\{\tilde\tA_k\}_{k\in\rZ}$ is the atomic non-redundant decomposition of $\tI_{\rB''}$. Finally, we define 
$\tilde\rB\coloneqq\rB'\rB''$, and
\begin{align*}
\begin{aligned}
\Qcircuit @C=0.7em @R=1em 
    {&
       \qw\poloFantasmaCn{\rA}& \multigate{1}{\tV_k}& \qw\poloFantasmaCn{\rB}&\qw\\
       && \pureghost{\tV_k}& \qw\poloFantasmaCn{\tilde\rB}&\qw}
\end{aligned}  \; \coloneqq
  \begin{aligned}
    \Qcircuit @C=1em @R=.7em @! R { & \qw\poloFantasmaCn{\rA}
      &\multigate{2}{\tW} &\qw& \qw\poloFantasmaCn{\rB}
      &\qw&\qw\\
      & &\pureghost{\tW} &\qw&\qw\poloFantasmaCn{\rB^\prime}
      &\qw&\qw \\
      &&\pureghost{\tW}&\qw\poloFantasmaCn{\rB''}
      &\gate{\tilde\tA_{k}}&\qw\poloFantasmaCn{\rB''}&\qw}      
  \end{aligned}\ .
\end{align*}
We now show that  \ref{i:universality} $\Rightarrow$ \ref{i:fc}. Indeed, given two tests $\test{A}\coloneqq \{\tA_i\}_{i\in\rX}\subseteq\Trnset{(\rA,\rB)}$ and
 $\test{B}\coloneqq\{\tB_j\}_{j\in\rY}\subseteq\Trnset{(\rA,\rC)}$, according to \ref{i:universality} one has
\begin{align}
  & \Qcircuit @C=0.7em @R=1em {& \qw\poloFantasmaCn{\rA}&
    \gate{\tA_i}& \qw\poloFantasmaCn{\rB}&\qw }
    \ =\;
    \sum_{k\in\rZ_i} 
 \begin{aligned}
  \Qcircuit @C=1em @R=.7em @! R {&
    \qw\poloFantasmaCn{\rA}&
    \multigate{1}{\tV_k}& \qw\poloFantasmaCn{\rB}&\qw \\
    &&\pureghost{\tV_k}&\qw\poloFantasmaCn{\tilde\rB}&\measureD{e}}
\end{aligned}\,,\\
  & \Qcircuit @C=0.7em @R=1em {& \qw\poloFantasmaCn{\rA}&
    \gate{\tB_j}& \qw\poloFantasmaCn{\rC}&\qw }
    \ =\;
    \sum_{l\in\rL_j} 
 \begin{aligned}
  \Qcircuit @C=1em @R=.7em @! R {&
    \qw\poloFantasmaCn{\rA}&
    \multigate{1}{\tW_l}& \qw\poloFantasmaCn{\rC}&\qw \\
    &&\pureghost{\tW_l}&\qw\poloFantasmaCn{\tilde\rC}&\measureD{e}}
\end{aligned}\,,
\end{align}  
with $\test{V}\coloneqq\{\tV_k\}_{k\in\rZ}\subseteq\Trnset{(\rA,\rB\tilde\rB)}$ and 
$\test{W}\coloneqq\{\tW_l\}_{l\in\rL}\subseteq\Trnset{(\rA,\rC\tilde\rC)}$
such that $\sum_k\tV_k=\tV$ and $\sum_l\tW_l=\tW$ are a left-reversible channels. Now to see that 
$\test{A}\rightarrow\test{B}$, in Eq.~\eqref{eq:weak-a} we consider the test $\test{C}$ equal to the test $\test{V}$, 
with $\rB'$ equal to $\tilde\rB$, and in Eq.~\eqref{eq:weak-b} we take the map
 \begin{align}
 \begin{aligned}
    \Qcircuit @C=1em @R=.7em @! R { 
      & \qw\poloFantasmaCn{\rB}
      &\multigate{1}{\tZ} & \qw\poloFantasmaCn{\rA}
      &\multigate{1}{\tW_l}&\qw\poloFantasmaCn{\rC}
      &\qw
      \\
      &\qw\poloFantasmaCn{\tilde\rB}
      &\ghost{\tZ}&
      &\pureghost{\tW_l}&\qw\poloFantasmaCn{\tilde\rC}
      &\qw}
\end{aligned},
 \end{align}
with $\tZ\in\Trnset{(\rB\tilde\rB,\rA)}$ the left-inverse of $\tV$, as post-processing map and sum over $(k,l)\in\rZ\times\rL_j$. Analogously one can show that also $\test{B}\rightarrow\test{A}$ holds.
\end{proof}

\begin{corollary}
A theory has some information that cannot be extracted without disturbance if and only if 
there exist incompatible tests.
\end{corollary}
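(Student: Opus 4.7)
The plan is to obtain this corollary directly by contraposition from the equivalence \ref{i:fc} $\Leftrightarrow$ \ref{i:fiwd} in the preceding theorem. I would unfold the two definitions into their negations and then invoke the theorem.

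First I would observe that, by Definition of full compatibility, the negation of \ref{i:fc} is precisely the statement that there exists a pair of tests $\test{A}\subseteq\Trnset{(\rA,\rB)}$, $\test{B}\subseteq\Trnset{(\rA,\rC)}$ such that $\test{A}\leftrightarrow\test{B}$ fails, namely a pair of (weakly) incompatible tests. This is just reading Definition~\ref{def:weak-fin} contrapositively and does not require any further argument.

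Second, I would unfold the negation of \ref{i:fiwd}. Full information without disturbance (Definition~\ref{def:fiwd}) says that every test $\test{A}\subseteq\Trnset{(\rA,\rB)}$ satisfies $\test{A}\rightarrow\tI_\rA$, i.e.\ its outcomes are obtainable together with a realization of the identity that can be post-processed into $\test{A}$. Its negation is therefore the existence of a test whose outcomes cannot be obtained while leaving the input system undisturbed, which is exactly the operational meaning of ``some information cannot be extracted without disturbance.''

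Finally, the theorem asserts \ref{i:fc} $\Leftrightarrow$ \ref{i:fiwd}, so the contrapositive yields the desired biconditional between ``there exist incompatible tests'' and ``some information cannot be extracted without disturbance.'' I do not expect any obstacle: once the two negations are correctly identified as ``existence of an incompatible pair'' and ``existence of a test violating $\test{A}\rightarrow\tI_\rA$,'' the corollary is a one-line consequence of the theorem. The only subtlety worth spelling out explicitly is that ``incompatible'' in the statement means weakly incompatible, which by Proposition~\ref{prop:stron-weak} is the strictly weaker of the two notions and thus the appropriate one to pair with the theorem's full-compatibility condition.
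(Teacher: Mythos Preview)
Your proposal is correct and matches the paper's own proof, which is the one-line observation that the corollary ``trivially follows from the equivalence between full-compatibility and full-information without disturbance.'' Your additional unpacking of the two negations is accurate (modulo a slightly garbled parenthetical in the second step---it is $\test{A}$ that is realized by $\test{C}$ and then post-processed to $\tI_\rA$, not the other way around), but the core argument is identical to the paper's.
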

\begin{proof}
This trivially follows from the equivalence between full-compatibility and full-information without disturbance.
\end{proof}


We conclude this Section proving that in a theory without incompatibility all the systems must be classical according to the following definition

\begin{definition}[Classical systmes] A system $\rA$, with linear dimension $d=\dim{(\Stset_\Reals(\rA))}=\dim{(\Cntset_\Reals(\rA))}$, is classical if 
the unique atomic non-redundant refinement of the identity $\{\tA_i\}_{i\rX}$ has cardinality $|\rX|=d$ and each block in the decomposition of Eq.~\eqref{eq:sumst} is one dimensional, namely $\dim(\tA_i\Stset_\Reals(\rA))=\dim(\tA_i\Cntset_\Reals(\rA))=1$ for every $i\in\rX$.
\end{definition}
We observe that according to the above definition a system is classical if and only if all its pure states are jointly 
perfectly discriminable. In this case the base of the conic hull of the set of states of each system is a simplex, 
which corresponds to a subset of the set of states for a convex theory. For a (weakly) causal theory, the set of 
states is a simplex itself, with $d+1$ vertices represented by the null state $0\in\Stset(\rA)$ and the $d$ 
deterministic pure states. An example of theory with such systems is obviously the customary classical information 
theory. However, even a theory where all systems are classical could differ from classical 
information theory, in particular in the rule for parallel composition of systems (see Ref.~\cite{PhysRevA.101.042118} 
for an example).

\begin{lemma}\label{lem:atomic-maps}
Let $\{\tA_i\}_{i\in\rX}\in\Trnset{(\rA)}$ and $\{\tB_j\}_{j\in\rY}\in\Trnset{(\rA)}$ be the unique non-redundant atomic decomposition of $\tI_\rA$ and $\tI_\rB$, respectively. If $\tC\in\Trnset{(\rA,\rB)}$ is atomic then one has $\tC=\tA_i\tC\tB_j$ for some $i\in\rX$ and $j\in\rY$.
\end{lemma}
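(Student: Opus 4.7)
The plan is to apply atomicity of $\tC$ twice---once against the input-side refinement induced by $\{\tA_i\}_{i\in\rX}$, once against the output-side refinement induced by $\{\tB_j\}_{j\in\rY}$---and then to splice the two reductions. Throughout I work in the right-to-left composition convention of the paper, so the target identity reads $\tC=\tB_{j^*}\tC\tA_{i^*}$ for a suitable pair $(i^*,j^*)\in\rX\times\rY$. The degenerate case $\tC=0$ is trivial (any product works), so I may assume $\tC\neq 0$.

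First I would form the test $\{\tC\tA_i\}_{i\in\rX}$ by sequentially composing the test $\{\tA_i\}$ before the singleton $\{\tC\}$; its coarse-graining is $\sum_{i}\tC\tA_i=\tC\tI_\rA=\tC$, so it constitutes a refinement of the atomic event $\tC$. By Definition~\ref{def:atomic-events} this refinement must be trivial, giving $\tC\tA_i=\mu_i\tC$ with $\mu_i\in[0,1]$ and (using $\tC\neq 0$) $\sum_i\mu_i=1$. I would then invoke the idempotent orthogonality $\tA_i\tA_{i'}=\delta_{ii'}\tA_i$ supplied by Proposition~\ref{prop:identity-atomic-decomposition} to compute in two ways:
\begin{equation*}
\delta_{ii'}\mu_i\tC=\tC(\tA_i\tA_{i'})=(\tC\tA_i)\tA_{i'}=\mu_i\mu_{i'}\tC.
\end{equation*}
Cancelling the nonzero $\tC$ yields the scalar identity $\delta_{ii'}\mu_i=\mu_i\mu_{i'}$. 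Specializing to $i=i'$ gives $\mu_i\in\{0,1\}$, whereas $i\neq i'$ forces $\mu_i\mu_{i'}=0$, so at most one $\mu_i$ is nonzero; combined with $\sum_i\mu_i=1$, exactly one index $i^*$ satisfies $\mu_{i^*}=1$. In particular $\tC\tA_{i^*}=\tC$.

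The second step is completely symmetric: I form the refinement $\{\tB_j\tC\}_{j\in\rY}$ of $\{\tC\}$ by composing $\{\tB_j\}$ after $\{\tC\}$, obtain $\tB_j\tC=\nu_j\tC$ by atomicity, and apply the analogous orthogonality argument using $\tB_j\tB_{j'}=\delta_{jj'}\tB_j$ to extract a unique $j^*$ with $\tB_{j^*}\tC=\tC$. Stitching both reductions together,
\begin{equation*}
\tC=\tB_{j^*}\tC=\tB_{j^*}(\tC\tA_{i^*})=\tB_{j^*}\tC\tA_{i^*},
\end{equation*}
which is the claimed identity.

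The only nontrivial conceptual step is the passage from $\tC\tA_i=\mu_i\tC$ to pinning down a \emph{unique} $i^*$ with $\mu_{i^*}=1$; this genuinely uses the orthogonality of the $\tA_i$, not merely that they sum to $\tI_\rA$. Everything else is bookkeeping: checking that $\{\tC\tA_i\}$ and $\{\tB_j\tC\}$ are bona fide tests of the theory (immediate from closure of tests under sequential composition) and justifying the cancellation $\alpha\tC=\beta\tC\Rightarrow\alpha=\beta$ by $\tC\neq 0$.
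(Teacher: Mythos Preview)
Your proof is correct and follows essentially the same strategy as the paper: use atomicity of $\tC$ to force proportionality, then invoke the idempotent orthogonality $\tA_i\tA_{i'}=\delta_{ii'}\tA_i$ (and its $\rB$-side counterpart) to reduce the proportionality constants to a single Kronecker delta. The only structural difference is that the paper handles both sides at once via the joint refinement $\sum_{i,j}\tB_j\tC\tA_i=\tC$, obtaining $\tB_j\tC\tA_i=\lambda_{ij}\tC$ and then $\lambda_{ij}\lambda_{kl}=\delta_{ik}\delta_{jl}\lambda_{ij}$ directly, whereas you treat the input and output sides separately and splice at the end; neither approach buys anything substantive over the other. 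One small point: $\{\tC\}$ is not a singleton test unless $\tC$ is deterministic, so to exhibit $\{\tC\tA_i\}_i$ as a bona fide refinement you should embed $\tC$ in a test $\{\tC_m\}_m$ and take the sequential composition $\{\tC_m\tA_i\}_{m,i}$---but this is a bookkeeping issue that the paper's proof also glosses over.
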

\begin{proof}
Since $\sum_{i,j}\tA_i\tC\tB_j=\tC$ and $\tC$ is atomic one has $\tA_i\tC\tB_j=\lambda_{ij}\tC$, with $\lambda_{ij}\geq 0$, for every $i\in\rX$ and $j\in\rY$. On the other hand it is $\tA_k\tA_i\tC\tB_j\tB_l=\delta_{ik}\delta_{jl}\tA_i\tC\tB_j$ showing that $\lambda_{ij}=1$ for one choice of $i,j$ while it is null otherwise.
\end{proof}

\begin{proposition}
In a theory with full compatibility all system of the theory are classical.
\end{proposition}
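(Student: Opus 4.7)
I would argue by contradiction, supposing that some block $\Stset_i(\rA)$ has linear dimension at least two, and using the characterization~\ref{i:universality} of the main theorem to deduce that every effect must take the same value on all deterministic states of that block, contradicting separation of states by effects. As a preliminary observation, in a fully compatible theory no non-trivial system can have $\tI_\rA$ atomic: otherwise Proposition~\ref{prop:identity-atomic-decomposition}(\ref{it:atid}) would impose no-information without disturbance on that system, incompatible with full information without disturbance. Hence Proposition~\ref{prop:identity-atomic-decomposition}(\ref{it:nonatid}) applies, giving the unique atomic non-redundant refinement $\tI_\rA=\sum_{i\in\rX}\tA_i$ with $|\rX|\ge 2$ and orthogonality $\tA_j\tA_i=\delta_{ij}\tA_i$.

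Now assume towards a contradiction that $\dim\Stset_i(\rA)\ge 2$. Then there exist distinct deterministic pure states $\psi_1,\psi_2\in\Stset_i(\rA)$ separated by some effect, which I rescale so that $0\le a\le e$ with $(a|\psi_1)\ne (a|\psi_2)$ and $\{a,e-a\}$ is a bona fide observation test. Invoking item~\ref{i:universality} on this test, I obtain a left-reversible channel $\tV\in\Trnset(\rA,\tilde\rB)$, a refinement $\{\tV_k\}_{k\in\rZ}$, and a partition $\rZ=\rZ_0\sqcup\rZ_1$ with $a=\sum_{k\in\rZ_0}\B{e}\tV_k$. Setting $\tV^{(s)}\coloneqq\sum_{k\in\rZ_s}\tV_k$, this reads $a=\B{e}\tV^{(0)}$. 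The crucial step is to promote a left-inverse $\tZ\in\Trnset(\tilde\rB,\rA)$ of $\tV$ into a \emph{channel} left-inverse. The effect $f\coloneqq\B{e}-\B{e}\tZ$ (the first $e$ on $\tilde\rB$, the second on $\rA$) is non-negative, because events do not increase total probability, and $f\tV=0$ since $\B{e}\tZ\tV=\B{e}\tI_\rA=\B{e}$. Fixing any deterministic $\rho_0\in\Stset(\rA)$, I set $\tZ'\coloneqq\tZ+\K{\rho_0}\B{f}$; this satisfies $\tZ'\tV=\tI_\rA$ and $\B{e}\tZ'=\B{e}$, so it is a channel. Writing $\tI_\rA=\tZ'\tV^{(0)}+\tZ'\tV^{(1)}=\sum_j\tA_j$ and sandwiching with $\tA_j(\cdot)\tA_i$, the relations $\tA_j\tA_i=\delta_{ij}\tA_i$ annihilate the off-diagonal blocks and atomicity of $\tA_i$ (exactly as in the proof of Lemma~\ref{lem:atomic-maps}) forces $\tA_j\tZ'\tV^{(s)}\tA_i=\delta_{ij}\mu_{i,s}\tA_i$ with $\mu_{i,0}+\mu_{i,1}=1$. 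Consequently, for every deterministic $\rho\in\Stset_i(\rA)$ one has $\tZ'\tV^{(0)}\rho=\mu_{i,0}\rho$, and applying the deterministic effect gives
\[
(a|\rho)=\B{e}\tV^{(0)}\rho=\B{e}\tZ'\tV^{(0)}\rho=\mu_{i,0},
\]
independently of $\rho$ in the block. In particular $(a|\psi_1)=(a|\psi_2)$, a contradiction. Therefore every block is one-dimensional, and since $\dim\Stset_\Reals(\rA)=\sum_i\dim\Stset_i(\rA)$ we also have $|\rX|=d$, so $\rA$ is classical.

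The step I expect to be the main obstacle is precisely the upgrade of $\tZ$ to a channel left-inverse: without trace-preservation the chain $(a|\rho)=\B{e}\tV^{(0)}\rho=\B{e}\tZ'\tV^{(0)}\rho$ breaks, because what the atomic decomposition of $\tI_\rA$ naturally produces is the auxiliary quantity $\B{e}\tZ\tV^{(0)}\rho$ rather than the physical outcome probability $\B{e}\tV^{(0)}\rho$. The correction $\K{\rho_0}\B{f}$ is the right object precisely because $f$ annihilates the range of $\tV$ on the nose, leaving the left-inverse property intact; its admissibility as an OPT event is guaranteed by strong causality, assumed throughout the paper. A subsidiary technical point, the existence of an effect $a$ with $0\le a\le e$ genuinely discriminating $\psi_1,\psi_2$, follows from standard finite-dimensional separation in the effect cone together with a rescaling.
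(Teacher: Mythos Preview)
Your argument is correct, and the core mechanism---atomicity of the idempotents $\tA_i$ forcing any refinement arising from a non-disturbing realization to be a scalar multiple of $\tA_i$---is the same as the paper's. The routes differ, however. The paper works on the effect side: it takes an atomic effect $a$ in a block $\bar j$, invokes $\test A\to\tI_\rA$ directly from Definition~\ref{def:fiwd}, and uses that the post-processings $\tP^{(k)}$ supplied by full information without disturbance are already \emph{channels}, so that $\B e\tP^{(k)}=\B e$ is immediate; atomicity of $\tB_{\bar j}$ then yields $\B a\propto\B e\tB_{\bar j}$. You instead work on the state side via condition~\ref{i:universality}, which hands you only a left-reversible $\tV$ with an arbitrary left-inverse $\tZ$, forcing your extra step of promoting $\tZ$ to a channel left-inverse $\tZ'=\tZ+\K{\rho_0}\B f$. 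That construction is sound---by strong causality it is the coarse-graining of the conditioned test ``if outcome $0$ apply $\tI_\rA$, else discard and re-prepare $\rho_0$''---but it is work the paper's route sidesteps by getting channels for free from the post-processing in Definition~\ref{def:fiwd}.

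Two minor points. Purity of $\psi_1,\psi_2$ is never used: distinct deterministic states in the block suffice, and no rescaling is needed since $\{a,e-a\}$ is automatically an observation test for any effect $a$ (it is the binary coarse-graining of any test containing $a$). Your preliminary justification for $|\rX|\ge2$ misreads Proposition~\ref{prop:identity-atomic-decomposition}(\ref{it:atid}), which characterizes no-information-without-disturbance via atomicity of the identity for \emph{every} system, not for one; but this is harmless, since your main argument covers the case of atomic $\tI_\rA$ as well, forcing $\dim\Stset_\Reals(\rA)=1$ there.
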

\begin{proof}
Consider an atomic effect $a\in\Cntset{(\rA)}$. If $\{\tB_j\}_{j\in\rY}$ is the unique non-redundant decomposition of the identity of system $\rA$, then it must be 
$\B{a}\tB_{\bar j}=\B{a}$ for some ${\bar j}\in\rY$.  Let $\tA\in\Trnset{(\rA,\rB)}$ be an atomic transformation such that $\B{e}\tA=\mu\B{a}$, with $e$ the deterministic effect of system $\rA$ and $1\geq\mu>0$. Since $\tA$ is atomic, due to Lemma~\ref{lem:atomic-maps}, it must be $\B{e}\tA=\B{e}\tA\tB_{\bar j}$.
Consider now a test $\test{A}\coloneqq\{\tA_i\}_{i\in\rX}$ such that $\tA=\tA_{\bar i}$ for some $\bar i\in\rX$. Since by hypothesis all tests are weakly compatible,  it is $\test{A}\rightarrow\tI$, and using Definition~\ref{def:weak} one has 
\begin{subequations}\label{eq:one-dim-a}
 \begin{align}
&\Qcircuit @C=0.7em @R=1em 
    {&\qw\poloFantasmaCn{\rA}& \gate{\tB_{\bar j}}&
       \qw\poloFantasmaCn{\rA}& \gate{\tA_{\bar i}}& \qw\poloFantasmaCn{\rB}&\qw
                                                                       }
                                                                       \; 
=\;  \sum_{k\in\rZ_{\bar i}} 
  \begin{aligned}
    \Qcircuit @C=1em @R=.7em @! R { 
    &\qw\poloFantasmaCn{\rA}& \gate{\tB_{\bar j}}& \qw\poloFantasmaCn{\rA}
      &\multigate{1}{{\tC}_{k}} &\qw\poloFantasmaCn{\rB}&\qw
      \\
     && & & \pureghost{{\tC}_{k}} &\qw\poloFantasmaCn{\rB'}
      &\measureD{e} }
  \end{aligned},\\\label{eq:one-dim-b}
&\Qcircuit @C=0.7em @R=1em 
    {&
       \qw\poloFantasmaCn{\rA}& \gate{\tB_{\bar j}}&\qw\poloFantasmaCn{\rA}& \gate{\tI}& \qw\poloFantasmaCn{\rA}&\qw
                                                                       }
                                                                       \; \nonumber\\
                                                                      & \qquad= \sum_{k} 
  \begin{aligned}
    \Qcircuit @C=1em @R=.7em @! R {& 
      \qw\poloFantasmaCn{\rA}& \gate{\tB_{\bar j}}& \qw\poloFantasmaCn{\rA}
      &\multigate{1}{{\tC}_{k}} &\qw\poloFantasmaCn{\rB}
      &\multigate{1}{{\tP^{(k)}}}  &\qw\poloFantasmaCn{\rA}
      &\qw
      \\
      &
      &
      & 
      & \pureghost{{\tC}_{k}} &\qw\poloFantasmaCn{\rB'}
      &  \ghost{{\tP_j^{(k)}}} &\qw\poloFantasmaCn{\rA'}
      &\measureD{e}}
  \end{aligned}\,
\end{align}
\end{subequations}
for some test $\test{C}=\{\tC_k\}_{k\in\rZ}$ and channels $\tP^{(k)}$. By atomicity of $\tB_{\bar j}$ and Eq.~\eqref{eq:one-dim-b} it follows that for every $k$
\begin{equation}\label{eq:one-dim-c}
  \begin{aligned}
    \Qcircuit @C=1em @R=.7em @! R {& 
      \qw\poloFantasmaCn{\rA}& \gate{\tB_{\bar j}}& \qw\poloFantasmaCn{\rA}
      &\multigate{1}{{\tC}_{k}} &\qw\poloFantasmaCn{\rB}
      &\multigate{1}{{\tP^{(k)}}}  &\qw\poloFantasmaCn{\rA}
      &\qw
      \\
      &
      &
      & 
      & \pureghost{{\tC}_{k}} &\qw\poloFantasmaCn{\rB'}
      &  \ghost{{\tP_j^{(k)}}} &\qw\poloFantasmaCn{\rA'}
      &\measureD{e}}
  \end{aligned}=\lambda_k\tB_{\bar j},
\end{equation}
for some $\lambda_k\geq 0$. Since $\mu\B{a}=\B{e}\tA_{\bar i}\tB_{\bar j}$, Eq.~\eqref{eq:one-dim-a} implies that
\begin{equation}
\mu\ 
\begin{aligned}
\Qcircuit @C=0.7em @R=1em 
    {&\qw\poloFantasmaCn{\rA}& \measureD{a}}
\end{aligned}                                                                       \; 
=\;  \sum_{k\in\rZ_{\bar i}} 
  \begin{aligned}
    \Qcircuit @C=1em @R=.7em @! R { 
    &\qw\poloFantasmaCn{\rA}& \gate{\tB_{\bar j}}& \qw\poloFantasmaCn{\rA}
      &\multigate{1}{{\tC}_{k}} &\qw\poloFantasmaCn{\rB}&\measureD{e}
      \\
     && & & \pureghost{{\tC}_{k}} &\qw\poloFantasmaCn{\rB'}
      &\measureD{e} }
      \end{aligned},
\end{equation}
and using Eq.~\eqref{eq:one-dim-c}, remembering that channels preserve the deterministic effect, one finally gets 
$\B{a}\propto\B{e}\tB_{\bar j}$. 
Since this argument holds for all atomic effects in $\Cntset_{\bar j}{(\rA)}$, we conclude that 
$\Cntset_{\bar j}{(\rA)}$ has linear dimension equal to one.
 \end{proof}

\section{Discussion}
We have introduced two alternative definitions of compatible tests of a causal operational probabilistic theory. We named the two conditions strong and weak compatibility because there exist weakly compatible tests that are not strongly compatible, e.~g.~any pair of reversible quantum channels. For the case of channels, strong compatibility coincides with the notion of channels compatibility already given in Refs.~\cite{Teiko-incompatibility-2016,Teiko-incompatibility-2017}. We have then shown that for the case of observables strong and weak compatibility coincide. Finally we proved that a necessary and sufficient condition for full-compatibility, namely all tests of the theory are weakly compatible, is full-information without disturbance, namely all the information can be extracted without disturbance. Moreover, both conditions imply that all the systems of the theory are classical systems.

On the basis of our results we understand that generalizing the notion of measurement incompatibility to arbitrary devices, say instruments, is not straightforward. If for two measurements the ability of implementing them simultaneously (strong compatibility) or in a sequence (weak compatibility) is equivalent, for two instruments it turns out that it is not. This opens the problem of characterizing the two different kinds of incompatibility in terms of resources for quantum information tasks. Despite our results hold for arbitrary causal theories, the discrepancy between weak and strong compatibility already occurs for quantum channels, and a first analysis can focus on the quantum case. Recently, quantum measurement incompatibility, which is known to be a necessary condition for violating Bell inequalities and for steering, has been characterized in terms of quantum programmability of instruments, namely the temporal freedom of a user in issuing programs to a quantum device~\cite{PhysRevLett.124.120401}. It is reasonable to assume that one of the two notions of quantum instruments incompatibility should be the analogue, in terms of resources for processing of programmable instruments, to that of measurements.

Finally, as quantum channels are all weakly compatible, it is clear that the source of weak incompatibility for
quantum instruments resides in the irreversible process of information extraction, which cannot be reduced to a 
reversible interaction with the detector by any means. Indeed, if this were the case, the two notions of strong
and weak compatibility would coincide in the quantum case. On the other hand, it is pretty well understood that
strong incompatibility for quantum instruments is connected with the impossibility of cloning.


\acknowledgments We thank the financial support of Elvia and Federico Faggin Foundation (Silicon Valley Community Foundation Project ID\#2020-214365).

\bibliographystyle{apsrev4-2}
\bibliography{bibliography}

\end{document}